\documentclass[graybox]{svmult}
\usepackage{mathptmx}       
\usepackage{helvet}         
\usepackage{courier}        
\usepackage{makeidx}         
\usepackage{graphicx}        
\usepackage{multicol}        
\usepackage[bottom]{footmisc}
\usepackage{amsmath,amssymb,amsfonts}        

\usepackage{lipsum}
\usepackage{epstopdf}
\usepackage{algorithm}
\usepackage{algorithmic}
\usepackage[utf8]{inputenc}
\usepackage{amsmath}
\usepackage{latexsym}
\usepackage{subfigure}
\usepackage{stmaryrd}
\usepackage{caption}
\usepackage{xcolor}
\usepackage[colorinlistoftodos]{todonotes}
\usepackage{enumerate}
\usepackage{bbm}
\usepackage{bm}
\usepackage{xspace,comment}
\usepackage{hyperref}

\usepackage{amsopn}

\newcommand{\QED}{\Box} 
    
\newcommand{\Real}{\mathbb{R}}  
\newcommand{\Ind}{\mathbbm{1}}

\newcommand\dtv[1]{\left\| #1 \right\|_{\rm tv}}

\newcommand{\mB}{{\mathcal B}}
\newcommand{\mP}{{\mathcal P}}

\newcommand{\mC}{{\mathcal C}}

\newcommand{\mF}{{\mathcal F}}

\newcommand{\mX}{{\mathcal X}}
\newcommand{\mY}{{\mathcal Y}}
\newcommand{\mN}{{\mathcal N}}

\newcommand{\mbP}{\mathbb{P}}   
\newcommand{\mbS}{\mathbb{S}} 
\newcommand{\mbE}{\mathbb{E}} 
\newcommand{\mbR}{\mathbb{R}}

\newcommand{\sd}{{\sf d}}

\newcommand{\sS}{{\sf S}}

\newcommand{\X}{\mathtt{X}}

\newcommand{\W}{\mathtt{W}}

\newcommand{\beq}{\begin{equation}}
\newcommand{\eeq}{\end{equation}}
\newcommand{\beqa}{\begin{eqnarray}}
\newcommand{\eeqa}{\end{eqnarray}}
\newcommand{\nn}{\nonumber}

\newcommand{\dy}{d_y}
\newcommand{\dx}{d_x}

\newtheorem{assumption}{Assumption}

\makeindex             

\begin{document}
\title*{On a class of constrained particle filters for continuous-discrete state space models}
\titlerunning{Constrained particle filters for continuous-discrete state space models}
\author{Utku Erdogan, Gabriel J. Lord and Joaquin Miguez}
\institute{Utku Erdogan \at Eskisehir Technical University (Turkey), \email{utkuerdogan@eskisehir.edu.tr}
\and Gabriel J. Lord \at Radboud University (The Netherlands), \email{gabriel.lord@ru.nl} \and Joaquin Miguez \at Universidad Carlos III de Madrid (Spain), \email{joaquin.miguez@uc3m.es}}
%
%
\maketitle

\abstract{
Particle filters (PFs) are recursive Monte Carlo algorithms for Bayesian tracking and prediction in state space models. This paper addresses continuous‑discrete filtering problems, where the hidden state evolves as an Itô stochastic differential equation (SDE) and observations arrive at discrete times. We propose a novel class of constrained PFs that enforce compact support on the state at each observation instant, thereby limiting exploration to plausible regions of the state space. Unlike earlier approaches that truncate the likelihood, the proposed method constrains the dynamics directly, yielding improved numerical stability. Under standard regularity assumptions, we prove convergence of the constrained filter, derive uniform‑in‑time error estimates, and extend the analysis to account for discretisation errors arising from numerical SDE solvers. A numerical study on a stochastic Lorenz‑96 system demonstrates the practical application of the methodology when the constraint is implemented via barrier functions.
}

%
\section{Introduction}

%
\subsection{Motivation and background}

Bayesian filtering is a probabilistic methodology for the processing of sequentially-observed data. The filtering problem involves the recursive computation of the probability distribution of the state of a dynamical system conditional on observations collected over time \cite{Anderson79,Bain08}, and applications include data assimilation \cite{Law15, Reich15}, optimal control \cite{Mitter02,Georgiou13}, nonlinear tracking \cite{Ristic04,Stone13}, and others. 

The exact implementation of Bayesian filters is unfeasible beyond a few examples, the most relevant being the Kalman filter for linear and Gaussian models \cite{Kalman60,Kalman61,Anderson79,Ristic04}. In practice, approximation techniques are employed, including Kalman-based methods \cite{Anderson79,Evensen03,Julier04}, recursive Monte Carlo algorithms \cite{Gordon93,Doucet00,Evensen03,Djuric03} or variational schemes \cite{Smidl08,Frazier23}. These techniques suffer from limitations which are related to a trade-off involving stability, accuracy and computational cost. For example, Kalman-based algorithms perform poorly with strongly nonlinear systems, or when noise distributions are heavy-tailed \cite{Ristic04}. Particle filters (PFs) \cite{Gordon93,Kitagawa96,Doucet00,Djuric03} are sequential Monte Carlo methods that handle nonlinearities in a rather natural way, but may collapse when the observations are highly-informative or the model dimension is high \cite{Bengtsson08,Snyder08}.

In this work, we are interested in a class of continuous-discrete filtering problems, where 
\begin{itemize}
\item the system state is a continuous-time stochastic process $\X(t)$, $t \in [0,T]$, characterised by an It\^o stochastic differential equation (SDE), and 
\item the observations $Y_1, \ldots, Y_M$ are collected sequentially at specific time instants $t_1<t_2<\cdots<t_M$, with $t_1 > 0$ and $t_M \le T$.
There are many real-world problems that can be naturally modelled within this framework. See \cite{Akyildiz24} for further discussion.
\end{itemize}

The continuous-discrete setting exacerbates the difficulties to devise efficient PFs. If the underlying SDE characterising the state is nonlinear, one can simulate it using numerical schemes \cite{Kloeden95} and this is sufficient to implement plain PFs \cite{Akyildiz24}. However, transition probability densities (from $t_{n-1}$ to $t_n$) cannot be evaluated in general, which makes the design of proposal distributions for sequential importance sampling \cite{Doucet00} very hard. One practical alternative is to run a standard PF on a modified system with constrained dynamics. The gist of this strategy is to restrict the exploration of the state space to typical solutions, while excluding others which are feasible but very unlikely. This approach can lead to algorithms which are numerically stable (i.e., they are more robust to the type of collapse described in \cite{Snyder08}) and yield acceptable accuracy with high probability. A classical method to constrain the filter dynamics is to truncate the support of the state $\X(t)$, a solution that has been explored algorithmically for Kalman filters \cite{Garcia12,Amor17} and also exploited to analyse filter stability \cite{Crisan20} and the convergence of PFs \cite{Heine08}.

%
\subsection{Contributions}

We introduce a class of constrained PFs for discretely-observed It\^o SDEs. Compared to earlier schemes in \cite{Heine08} or \cite{Crisan20}, the constraint is applied to the system dynamics (i.e., the SDE) instead of the likelihood function. It restricts the support of the state $\X(t_n)$ to a compact subset of the state space at each observation time $t_n$. 

Assuming an exact implementation of the constraint, we present a convergence analysis of the resulting PF. We analytically obtain error bounds that account both for the discrepancy with respect to (w.r.t.) the optimal, unconstrained Bayesian filter and for the Monte Carlo sampling errors. Under standard assumptions (following \cite{DelMoral04} or \cite{Kunsch05}), these bounds hold uniformly over time, even if $M,T \to \infty$. Based on the framework of \cite{Akyildiz24}, we also extend our analysis to account for the effect of time-discretisation (i.e., the numerical scheme used for the simulation of the underlying SDE) and obtain error bounds that incorporate these errors as well. The latter analysis is carried out under smoothness assumptions and yields convergence of the constrained PF for any finite time horizon $T<\infty$.

Finally, we carry out a numerical study to illustrate the practical application of the methodology. The model of choice for the computer simulations is a stochastic Lorenz 96 system with partial and noisy observations. The constraint on the PF is implemented via a modification of the drift in the underlying SDE using `barrier functions' as proposed in \cite{Erdogan26}. This implementation is only approximate, but guarantees the restriction of the state space and it is computationally much simpler than exact implementations based on rejection sampling or the Doob $h$-transform \cite{Erdogan26,Pieper-Sethmacher25b}.

%
\subsection{Organization of the paper}

The continuous-discrete filtering problem and the standard PF for this setting are described in Section \ref{sBackground}. In Section \ref{sConstrained} we introduce the new class of constrained PFs for discretely-observed It\^o diffusions and analyse their convergence under different sets of regularity assumptions. A numerical example is presented in Section \ref{sNumerics}. Technical proofs are detailed in Section \ref{sProofs} and Section \ref{sConclusions} is devoted to the conclusions.

%
\section{Filtering in discretely-observed It\^o diffusions} \label{sBackground}

%
\subsection{State space models} \label{ssIntroModels}

The signal of interest is modelled as a continuous-time It\^o process $\X(t)$ defined on a probability space $(\Omega,\mF,\mbP)$ and described by the stochastic differential equation (SDE) 
\beq\label{eq:sde}
\sd \X(t) = a(\X(t),t) \sd t + s(\X(t),t) \sd \W(t),  
\quad t \in [0,T],
\eeq
where 
\begin{itemize}
\item $t$ denotes continuous time and $T$ is some time horizon, 
\item $a:\mX \times [0,T] \mapsto \mX$ is the drift function that maps the signal space $\mX \subseteq \mbR^{d_x}$ into itself at any time $t$,
\item $\W(t)$ is a $d_w$-dimensional Wiener process, 
\item and $s:\mX \times [0,T] \mapsto \Real^{\dx \times d_w}$ is the diffusion (matrix) coefficient.
\end{itemize} 
The $i$-th entry of the $d_X$-dimensional drift term is denoted $a_i(\X(t),t)$; similarly, the entry in the $i$-th row and $j$-th column of the diffusion coefficient is denoted $s_{i,j}(\X(t),t)$. We construct a discrete-time sequence $X_n := \X(t_n)$ by sampling the process $\X(t)$ on the grid
\beq
0 = t_0 < t_1 < \cdots < t_n < \cdots < t_M = T.
\label{eqGrid}
\eeq
The dynamics of the random sequence $X_n$ are governed by Markov kernels $K_n: \mX \times \mB(\mX) \mapsto [0,1]$, where $n=1, \ldots, M$, and $\mB(\mX)$ is the Borel sigma-algebra of subsets of the state space $\mX$. Specifically, we define
\beq
K_n(x',\sd x) := \mbP\left( \X(t_n) \in \sd x | \X(t_{n-1})=x' \right), \quad \text{for $n \ge 1$}.
\nn
\eeq


The time grid \eqref{eqGrid} corresponds to the time instants when the signal is measured. We assume observations of the form 
\beq 
\label{eq: ObsModel}
Y_n = m(X_n) + U_n, \quad n=1, \ldots, M,
\eeq
where $U_n$ is a sequence of zero-mean, independent identically distributed (i.i.d.), $\dy$-dimensional real random variables (r.vs.), and the observation function $m: \mX \mapsto \mY$ maps the state space $\mX \subseteq \Real^{\dx}$ into the observation space $\mY \subseteq \Real^{\dy}$. The state $X_n$ and the observations $Y_n$ are related through a likelihood $g_n(x)$ that depends on the probability law of the noise variables $U_n$ and the observed value $Y_n= y_n$. For example, if the noise is zero-mean Gaussian with covariance matrix $\Sigma_u$, denoted $U_n \sim \mN(0,\Sigma_u)$, then
\beq
g_n(x) \propto \exp\left\{ 
    -\frac{1}{2}\left\|
        y_n - m(x)
    \right\|_{\Sigma_u^{-1}}^2
\right\}
\nn
\eeq
where $\left\| y_n - m(x) \right\|_{\Sigma_u^{-1}}^2 = \left(
    y_n - m(x)
\right)^\top \Sigma_u^{-1} \left(
    y_n - m(x)
\right)
$
and ${}^\top$ denotes transposition.

%
\subsection{Bayesian filtering}

Let $K=\{ K_n \}_{n \ge 1}$ be the sequence of Markov kernels, let $g=\{g_n\}_{n\ge 1}$ be the sequence of likelihoods for the grid in \eqref{eqGrid} and assume that $X_0=\X(t_0)$ has known probability law $\pi_0$. Then, we denote the system of interest as $\sS:=(\pi_0, K, g)$ and we refer to $\sS$ as a Markov state space model (SSM). 

We are interested in the marginal probability measures 
\beq
\xi_n(\sd x) := \mbP( X_n \in \sd x | Y_{1:n-1} = y_{1:n-1} )
~~\text{and}~~
\pi_n(\sd x) := \mbP( X_n \in \sd x | Y_{1:n} = y_{1:n} )
\nn
\eeq
generated by model $\sS$ for a fixed sequence of observations $\{ Y_1=y_1, \ldots, Y_n=y_n\}$. The measure $\pi_n$ is the marginal probability law of $X_n$ conditional on the available data up to time $t_n$, and it is often termed the optimal Bayesian filter. The measure $\xi_n$ is the marginal probability law of $X_n$ conditional on the data up to time $t_{n-1}$ and we refer to it as the (one step ahead) predictive distribution.

If we let $\mP(\mX)$ be the family of probability measures on $(\mX,\mB(\mX))$, then for any $\mu \in \mP(\mX)$ and any real measurable function $f : \mX \mapsto \Real$ we introduce the shorthand
\beq
\mu(f) := \int_\mX f\sd \mu.
\nn
\eeq
With this notation, we define the sequence of operators
$
\Phi_n : \mP(\mX) \mapsto \mP(\mX),
$
\beq
\Phi_n(\mu)(f) := \frac{
	K_n\mu(fg_n)
}{
	K_n\mu(g_n)
},
\label{eqPUop}
\eeq
where $K_n\mu(\sd z) := \int_{\mX} K_n(x,\sd z)\mu(\sd x)$. It is easy to show \cite{Crisan20} that the operators $\{\Phi_n\}_{n\ge 1}$ yield the sequence of probability measures
\beq
\pi_n := \Phi_n(\pi_{n-1}), \quad n\ge 1,
\eeq
generated by the model $\sS$. The composition of consecutive operators is denoted
\beq 
\Phi_{m:n} = \Phi_n \circ \Phi_{n-1} \circ \cdots \circ \Phi_{m+1},
\nn
\eeq
for $m\le n$, in such a way that $\pi_n = \Phi_{m:n}(\pi_m)$. We adopt the convention $\Phi_{n:n}(\mu)=\mu$. Also note that $\Phi_{n-1:n} = \Phi_n$. 

We often refer to $\Phi_n$ as a prediction-update (PU) operator, because it can be interpreted as comprising the one-step-ahead prediction $\xi_n = K_n\pi_{n-1}$ followed by the Bayesian update $\pi_n(\sd x) = \frac{g_n(x) \xi_n(\sd x)}{\xi_n(g_n)}$.

%
\subsection{Stability of the optimal filter}

Let $\dtv{\cdot}$ denote the total variation norm. The sequence of optimal filters $\pi_n=\Phi_n(\pi_{n-1})$, $n \ge 1$, is called {\em stable} when
$$
\lim_{n\to\infty} \dtv{ \Phi_{0:n}(\pi_0) - \Phi_{0:n}(\eta_0) } = 0
$$
no matter the choice of a priori distributions $\pi_0,\mu_0 \in \mP(\mX)$. It is apparent that that the stability of $\pi_n$ is a property of the operators $\Phi_n$ (see \cite{Crisan20} for further details). A sufficient condition for stability is that the Markov kernels in $K$ are {\em mixing} and the likelihood functions in $g$ integrable w.r.t. some reference probability density functions (pdfs) $u=\{u_n\}_{n\ge 1}$ on $\mX$ \cite{DelMoral01c,DelMoral04,Kunsch05}.

\begin{definition} \label{defMixing}
The Markov kernels $K=\{K_n\}_{n\ge 1}$ are mixing with constant $0<\gamma<1$ when
\begin{enumerate}[(a)]
\item $K_n(x',\sd x) = k_n(x',x) \sd x$ for every $n \ge 1$, where $k_n: \mX \times \mX \mapsto [0,\infty)$ is a pdf w.r.t. the Lebesgue measure, and
\item there is a sequence of pdfs $u_n:\mX\mapsto [0,\infty)$ such that
\beq
\gamma u_n(x) \le k_n(x',x) \le \frac{1}{\gamma} u_n(x)
\quad \text{for every $n \ge 1$ and every $x' \in \mX$.}
\nn
\eeq
\end{enumerate}
\end{definition}

\begin{lemma} \label{lmStability}
Assume that the Markov kernels $K=\{ K_n \}_{n \ge 1}$ are mixing with constant $\gamma>0$ and pdfs $u=\{u_n\}_{n\ge 1}$, and the likelihoods $g=\{g_n\}_{n\ge 1}$ are integrable, namely,
\beq
\int_\mX g_n(x)u_n(x) \sd x < \infty, \quad \text{for all $n\ge 1$}. 
\nn
\eeq
Then, for any $n \ge m \ge 0$,
\beq
\dtv{\Phi_{m:n}(\mu) - \Phi_{m:n}(\eta)} \le \frac{(1-\gamma^2)^{n-m}}{\gamma^2} \dtv{\mu - \eta}
\nn
\eeq
and, in particular, $\lim_{n\to\infty} \dtv{ \Phi_{0:n}(\mu) - \Phi_{0:n}(\eta) } = 0$ for any $\mu,\eta \in \mP(\mX)$.
\end{lemma}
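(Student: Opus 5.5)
The plan is to write each $\Phi_k$ as the composition of a linear Markov kernel with a change of measure that does not depend on the input, so that the Dobrushin contraction of that kernel can be used. This is the standard backward-kernel (Feynman--Kac) argument of \cite{DelMoral04,Kunsch05}.

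\emph{Step 1: representation.} For $n>m$, write $\Phi_{m:n}(\mu)$ as the normalisation of $\mu Q_{m,n}$, where $Q_{m,n}(x,\sd z)$ is the unnormalised kernel $\int \prod_{k=m+1}^{n} g_k(x_k)K_k(x_{k-1},\sd x_k)$ with $x_m=x$. Let $h_{m,n}(x):=Q_{m,n}\mathbf 1(x)$. Then
\[
\Phi_{m:n}(\mu)(f)=\frac{\mu(h_{m,n}\,P_{m,n}f)}{\mu(h_{m,n})},\qquad P_{m,n}f(x):=\frac{Q_{m,n}f(x)}{h_{m,n}(x)} .
\]
So $\Phi_{m:n}(\mu)=\Psi_{h}(\mu)P_{m,n}$, where $\Psi_h(\mu)(\sd x)\propto h(x)\mu(\sd x)$. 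Mixing together with integrability of $g_k$ against $u_k$ guarantees $0<h_{m,n}<\infty$: the lower and upper bounds come from $\gamma u_{m+1}\le k_{m+1}(x,\cdot)\le \gamma^{-1}u_{m+1}$. If needed, I would also assume $\int g_k u_k>0$.

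\emph{Step 2: bound the reweighting.} By the mixing condition applied at the first step,
\[
\gamma\, c\le h_{m,n}(x)\le \gamma^{-1}c
\]
for some $c>0$ independent of $x$. Hence $\sup h/\inf h\le\gamma^{-2}$. A standard computation then gives
\[
\dtv{\Psi_h(\mu)-\Psi_h(\eta)}\le \frac{\sup h}{\inf h}\,\dtv{\mu-\eta}\le \gamma^{-2}\dtv{\mu-\eta}.
\]
To see this, write $\Psi_h(\mu)(A)-\Psi_h(\eta)(A)$ as $\frac{(\mu-\eta)(h(\mathbf 1_A-\Psi_h(\eta)(A)))}{\mu(h)}$ and bound the numerator by $\sup h\,\dtv{\mu-\eta}$ and the denominator below by $\inf h$. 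This step is where the prefactor $1/\gamma^2$ comes from.

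\emph{Step 3: contraction of $P_{m,n}$.} $P_{m,n}$ is a Markov kernel that factorises as a product $R_{m+1}\cdots R_n$ of inhomogeneous Markov kernels:
\[
R_k(x_{k-1},\sd x_k)\propto K_k(x_{k-1},\sd x_k)\,g_k(x_k)\,h_{k,n}(x_k),
\]
with $h_{n,n}\equiv 1$. Using $\gamma u_k\le k_k(x',\cdot)\le\gamma^{-1}u_k$ in both numerator and normaliser gives the minorisation
\[
R_k(x,\cdot)\ge \gamma^{2}\,\nu_k(\cdot),\qquad \nu_k(\sd z)\propto u_k(z)g_k(z)h_{k,n}(z)\,\sd z .
\]
Finiteness of $\nu_k$ is where integrability of $g_ku_k$ is used. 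Dobrushin's coefficient then satisfies $\beta(R_k)\le 1-\gamma^2$, so $\beta(P_{m,n})\le(1-\gamma^2)^{n-m}$. Combining with Step 2:
\[
\dtv{\Phi_{m:n}(\mu)-\Phi_{m:n}(\eta)}\le (1-\gamma^2)^{n-m}\,\dtv{\Psi_h(\mu)-\Psi_h(\eta)}\le\frac{(1-\gamma^2)^{n-m}}{\gamma^2}\dtv{\mu-\eta}.
\]
The case $n=m$ is trivial. Letting $n\to\infty$ with $m=0$, and using $\dtv{\cdot}\le 2$, yields the limit.

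The main obstacle is Step 3. One must check carefully that the backward functions $h_{k,n}$ make each $R_k$ a genuine Markov kernel. One must also check that the minorisation constant is exactly $\gamma^2$, with one factor of $\gamma$ from each side of the mixing bound, rather than $\gamma^4$; this means the normaliser $\int k_k(x,z)g_k h_{k,n}\,\sd z$ must be bounded above by $\gamma^{-1}\int u_k g_k h_{k,n}$.
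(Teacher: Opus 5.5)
Your proposal is correct and follows essentially the same route as the source the paper relies on: the paper gives no proof of its own and defers to Lemma~10 of \cite{Kunsch05}, whose argument is the same two-part decomposition $\Phi_{m:n}(\mu)=\Psi_{h_{m,n}}(\mu)P_{m,n}$. There, too, the reweighting costs $\sup h_{m,n}/\inf h_{m,n}\le\gamma^{-2}$, and each conditional kernel $R_k$ is minorised by $\gamma^2\nu_k$, which gives contraction $1-\gamma^2$ per step. Two points you should state explicitly are the backward induction showing each $h_{k,n}$ is bounded above and below, which makes $\nu_k$ a finite nonzero measure, and the implicit condition $\int g_k u_k>0$, without which $\Phi_k$ is not even well defined.
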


This is a well known result (see Lemma 10 in \cite{Kunsch05}). The mixing condition in Definition \ref{defMixing} is sufficient but not necessary for stability. 

%
\subsection{Particle filters}

PFs are recursive Monte Carlo algorithms used to approximate the probability laws $\{\pi_n\}_{n\ge 0}$. The simplest scheme is the bootstrap filter or standard PF \cite{Gordon93,Kitagawa96} (see also \cite{Liu98,Doucet00,Djuric03}). It is sketched as Algorithm \ref{alPF} below. The Monte Carlo estimator of $\pi_n$ is 
\beq
\pi_n^N(\sd x) = \frac{1}{N}\sum_{i=1}^N \delta_{x_n^i}(\sd x),
\eeq 
where $\delta_{x}$ denotes the Dirac delta measure located at $x \in \mX_n$ and $x_n^1, \ldots, x_n^N$ are Monte Carlo samples on the state space $\mX$. Many more sophisticated versions of this algorithm can be devised, including the use of proposal distributions different from the Markov kernel $K_n(x,\sd x)$ and more efficient resampling schemes. See \cite{Doucet00,Djuric03,DelMoral04,Cappe07,Bain08,Sarkka13,Chopin20} and references therein. 

\begin{algorithm}
\caption{Standard PF for a SSM $\sS=(\pi_0,K,g)$.}
\label{alPF}
\begin{itemize}
\item[] \textit{Inputs}: model $\sS=(\pi_0,K,g)$ and number of particles $N$.\\
\textit{Outputs}: approximation $\pi_n^N$ of the optimal filter $\pi_n$ at every step $n\ge 1$.
\item \textbf{Initialisation.} At time $n=0$, draw $N$ iid samples $x_0^i \sim \pi_0$, $i=1, \ldots, N$.
\item \textbf{Recursive step.} At time $n\ge 1$:
	\begin{enumerate}
	\item For $i=1, \ldots, N$, 
		\begin{itemize}
		\item simulate $\bar x_n^i \sim K_n(x_{n-1},\sd x)$ and
		\item compute $\bar w_n^i = g_n\left(\bar x_n^i\right)$.
		\end{itemize}
	\item Normalise the weights, $w_n^i = \frac{\bar w_n^i}{\sum_{j=1}^N \bar w_n^j}$, $i=1, \ldots, N$.
	\item Resample: let $\bar \pi_n^N(\sd x) = \sum_{j=1}^N w_n^j \delta_{\bar x_n^j}(\sd x)$ and draw $N$ iid samples $x_n^i \sim \bar \pi_n^N$, $i=1, \ldots, N$. 
	\end{enumerate}
\end{itemize}
\end{algorithm}

Consistency of the standard PF can be proved under very mild assumptions \cite{Crisan00,DelMoral00,Bain08,Crisan14a} even when the observations are not conditionally independent or the state $X_n$ is not Markov \cite{Miguez13b}. If the sequence of optimal filters $\pi_n$, $n \ge 1$, is stable then (under some additional regularity conditions) $\pi_n^N$ converges uniformly over time towards the optimal filter $\pi_n$ \cite{DelMoral01c,DelMoral04,Kunsch05,Whiteley13b}. In particular, for any bounded and measurable test function $f:\mX\mapsto\mbR$,
\beq
\sup_{n \ge 1} \| \pi_n(f) - \pi_n^N(f) \|_p \le \frac{C_p \|f\|_\infty}{\sqrt{N}}
\nn
\eeq
for all $p \ge 1$, where $C_p$ is a finite constant independent of $N$ and time $n$.

%
\section{Particle filters on constrained state space models} \label{sConstrained}

%
\subsection{Constrained models} \label{ssConstrainedModels}

Let $\sS = (\pi_0, K, g)$ be a Markov SSM and let $\mC=\{\mC_n\}_{n \ge 0}$ be a sequence of compact subsets of the state space $\mX$. We construct a constrained SSM $\hat \sS = (\hat \pi_0,\hat K,g)$ by modifying the prior distribution and the family of kernels, namely,
\beq
\hat K_n(x',\sd x) := \frac{
    \Ind_{\mC_n}(x) K_n(x',\sd x)
}{
    \int_{\mC_n} K_n(x',\sd x)
} 
\quad \text{and} \quad
\hat\pi_0(\sd x) = \frac{\Ind_{\mC_0}(x)\pi_0(\sd x)}{\int_{\mC_0} \pi_0(\sd x)},
\label{eqBasicConstraint}
\eeq
where 
$$
\Ind_{\mC_n}(x) = \left\{ 
    \begin{array}{ll} 
    1, &\text{if $x\in\mC_n$}\\
    0, &\text{otherwise}\\
    \end{array}
\right.
$$
is the indicator function. It is apparent that $\hat K_n(x',\mX)=1$ for any $x'\in\mX$ whenever $K(x',\mX)=1$, hence the kernel $\hat K_n$ is well defined if the original kernel $K$ is well defined.

The PU operators for $\hat\sS$ are constructed as in Eq. \eqref{eqPUop}, i.e.,
\beq
\hat\Phi_n(\mu)(f) := \frac{
    \hat K_n \mu(fg_n)
}{
    \hat K_n\mu(g_n)
}
\label{eqCPUop}
\eeq
and, given an initial distribution $\hat\pi_0$, they yield the sequence of marginal laws $\hat\pi_n = \hat\Phi_n(\hat\pi_{n-1}) = \hat \Phi_{0:n}(\hat\pi_0)$, where $\hat\Phi_{0:n}=\hat\Phi_n \circ \cdots \circ \hat\Phi_1$. Moreover, 
\beq
\hat\pi_n(\sd x) = \hat\Phi_n(\hat\pi_{n-1})(\sd x)= \frac{
    g_n(x)\hat\xi_n(\sd x)
}{
    \hat\xi_n(g_n)
},
\nn
\eeq
where $\hat\xi_n(\sd x) = \hat K_n\hat\pi_{n-1}(\sd x)$ is the constrained one-step-ahead predictive law.

%
\subsection{Stability}

We assume some regularity for the (unconstrained) SSM of interest $\sS$.

\begin{assumption} \label{assModel1}
The elements of model $\sS=(\pi_0,K,g)$ satisfy the conditions below:  
\begin{itemize}
\item there are density functions $k_n:\mX\times\mX\mapsto(0,\infty)$ such that $K_n(x',\sd x)=k_n(x',x)\sd x$ (in particular, $k_n(x',x)>0$ for any $x',x\in \mX$ and $n\ge 1$),
\item $\|k\|_\infty := \sup_{n,x',x} k_n(x',x) < \infty$, and 
\item $g_n>0$ for every $n\ge 1$ and $\|g\|_\infty := \sup_{n\ge 1} \|g_n\|_\infty < \infty$.
\end{itemize}
\end{assumption}

The sufficient conditions for stability in Lemma \ref{lmStability} hold in a natural way for a constrained SSM $\hat \sS=(\hat \pi_0, \hat K, g)$ constructed from a model $\sS=(\pi_0,K,g)$ that satisfies Assumption \ref{assModel1}. This is made precise by Theorem \ref{thStabilitySl} below, which is proved in \cite{Erdogan26} (see Theorem 3.2 therein). 

\begin{theorem} \label{thStabilitySl}
Let $\sS=(\pi_0,K,g) \in \mbS$ be a SSM for which Assumption \ref{assModel1} holds. If we choose a sequence $\mC=\{\mC_n\}_{n\ge 0}$ such that 
\beq
\inf_{n\ge 1} \inf_{(x',x) \in \mC_{n-1} \times \mC_n} k_n(x',x) \ge \bar\gamma
\label{eqlowerboundinC}
\eeq
for some $\bar\gamma>0$, then 
\beq
\dtv{\hat \Phi_{0:n}(\mu) - \hat\Phi_{0:n}(\eta)} \le \frac{(1-\gamma^2)^n}{\gamma^2}\dtv{\mu-\eta},
\label{eqStabHatS}
\eeq
where $\hat\Phi_n$ is the constrained PU operator in \eqref{eqCPUop} and $\gamma=\frac{\bar\gamma^2}{\|k\|_\infty^2>0}$. In particular, 
$$
\lim_{n\to\infty} \dtv{\hat \Phi_{0:n}(\mu) - \hat\Phi_{0:n}(\eta)} = 0 
\quad \text{for any $\mu,\eta \in \mP(\mX)$,}
$$ 
hence the SSM $\hat \sS$ generates a stable sequence of marginal probability laws $\{\hat \pi_n\}_{n\ge 0}$.
\end{theorem}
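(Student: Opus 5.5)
The plan is to verify that the constrained kernels $\hat K=\{\hat K_n\}_{n\ge 1}$ are mixing in the sense of Definition \ref{defMixing}, with constant $\gamma=\bar\gamma^2/\|k\|_\infty^2$, and that the likelihoods are integrable with respect to the associated reference pdfs. The bound \eqref{eqStabHatS} then follows directly from Lemma \ref{lmStability} with $m=0$, applied to the PU operators $\hat\Phi_n$. The limit as $n\to\infty$ is immediate because $0<\gamma<1$.

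\textbf{Step 1: density of $\hat K_n$.} By Assumption \ref{assModel1}, $\hat K_n(x',\sd x)=\hat k_n(x',x)\sd x$ with
\[
\hat k_n(x',x)=\frac{\Ind_{\mC_n}(x)\,k_n(x',x)}{Z_n(x')},\qquad Z_n(x')=\int_{\mC_n}k_n(x',z)\sd z .
\]
For $x'\in\mC_{n-1}$ we have the two-sided bound
\[
\bar\gamma\,|\mC_n|\le Z_n(x')\le \|k\|_\infty|\mC_n| ,
\]
where $|\mC_n|$ is the Lebesgue measure of $\mC_n$. This is finite by compactness and positive because of the lower bound \eqref{eqlowerboundinC}, which implicitly requires $|\mC_n|>0$. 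Take $u_n(x)=\Ind_{\mC_n}(x)/|\mC_n|$, the uniform pdf on $\mC_n$. For $x'\in\mC_{n-1}$ and $x\in\mC_n$,
\[
\frac{\bar\gamma}{\|k\|_\infty}\,u_n(x)\le \hat k_n(x',x)\le \frac{\|k\|_\infty}{\bar\gamma}\,u_n(x),
\]
and both sides vanish for $x\notin\mC_n$. Since $\bar\gamma\le\|k\|_\infty$, the constant $\gamma=\bar\gamma^2/\|k\|_\infty^2\le\bar\gamma/\|k\|_\infty$ also works. I would state the constant in this weaker squared form to match the statement, noting that the stronger constant is available as well.

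\textbf{Step 2: points $x'\notin\mC_{n-1}$.} The main obstacle is that Definition \ref{defMixing} asks for the bound at every $x'\in\mX$, while \eqref{eqlowerboundinC} only controls $k_n$ on $\mC_{n-1}\times\mC_n$. I would resolve this by observing that the constrained filter only ever evaluates $\hat K_n$ on measures supported in $\mC_{n-1}$.
\begin{itemize}
\item For $n\ge 1$, $\hat\Phi_n(\mu)$ is supported in $\mC_n$ for any $\mu$.
\item $\hat\pi_0$ is supported in $\mC_0$.
\end{itemize}
So I would restrict the state space at step $n$ to $\mC_{n-1}$, or equivalently redefine $\hat k_n(x',\cdot)$ for $x'\notin\mC_{n-1}$ (e.g.\ as $u_n$) without changing any quantity that appears in the recursion. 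For $n=1$ and general initial laws $\mu,\eta$, I would either assume they are supported in $\mC_0$ (as $\hat\pi_0$ is) or use the first-step argument in the proof of Lemma 10 in \cite{Kunsch05}. That argument in fact only uses the mixing bounds on the support of the measures being propagated. I would also check that the Dobrushin-type contraction there goes through with kernels mixing only for $x'$ in that support. A further advantage of the squared constant $\gamma$ is that it leaves room to absorb the first step.

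\textbf{Step 3: integrability and conclusion.} We have
\[
\int g_n u_n\le\|g\|_\infty<\infty,
\]
and $g_n>0$ ensures the normalisations $\hat K_n\mu(g_n)>0$. All hypotheses of Lemma \ref{lmStability} therefore hold for $\hat\sS$, which yields \eqref{eqStabHatS}. Letting $n\to\infty$ gives stability of $\{\hat\pi_n\}$.
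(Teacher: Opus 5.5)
Steps 1 and 3 are correct, and they follow the route the paper indicates (it defers the proof to Theorem 3.2 of \cite{Erdogan26} and says the constrained model satisfies the hypotheses of Lemma \ref{lmStability}): show that the constrained kernels are mixing on the relevant sets, then apply that lemma. Comparing $\hat k_n(x',\cdot)$ directly with the uniform density on $\mC_n$ even gives the sharper constant $\bar\gamma/\|k\|_\infty$. The stated $\gamma=\bar\gamma^2/\|k\|_\infty^2$ is what one gets by comparing $\hat k_n(x',\cdot)$ with $\hat k_n(x'',\cdot)$ for $x',x''\in\mC_{n-1}$, since $k_n$ and the normaliser $Z_n$ each contribute a factor $\bar\gamma/\|k\|_\infty$. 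A minor point: $|\mC_n|>0$ is needed for $\hat K_n$ to be well defined at all, and it is not implied by \eqref{eqlowerboundinC}. Your redefinition trick makes the argument rigorous. Set $\tilde k_n(x',\cdot)=u_n$ for $x'\notin\mC_{n-1}$. Then $\tilde K$ is mixing on all of $\mX$, and $\tilde\Phi_{0:n}(\mu)=\hat\Phi_{0:n}(\mu)$ whenever $\mu(\mC_0)=1$.

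The flaw is in your fallbacks for initial laws that charge $\mX\setminus\mC_0$. None of them works:
\begin{itemize}
\item For such $\mu$, the redefinition changes $\hat\Phi_1(\mu)$.
\item The first step of K\"unsch's argument needs the mixing bound on the supports of $\mu,\eta$, which is exactly where it is missing.
\item The squared constant cannot ``absorb'' the first step, because \eqref{eqStabHatS} is false for such laws.
\end{itemize}
To see this, write $\rho_z:=\hat\Phi_1(\delta_z)$, take $a\in\mC_0$, $b\notin\mC_0$, $\mu=\delta_a$ and $\eta=(1-\epsilon)\delta_a+\epsilon\delta_b$. Then
\[
\hat\Phi_1(\eta)-\hat\Phi_1(\mu)=\frac{\epsilon B}{(1-\epsilon)A+\epsilon B}\,\bigl(\rho_b-\rho_a\bigr),
\qquad A=\hat K_1\delta_a(g_1),\quad B=\hat K_1\delta_b(g_1),
\]
and \eqref{eqlowerboundinC} says nothing about $B/A$. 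Concretely:
\begin{itemize}
\item Let $\mC_0=\mC_1=[0,1]$, and let $k_1(x',\cdot)=\frac12$ on $[0,1]$ for $x'\in[0,1]$.
\item Let $k_1(b,\cdot)$ equal $1$ on $P=[1-\theta,1]$ and $\theta^2$ on $[0,1-\theta)$. Complete all kernels to positive pdfs bounded by $1$, so $\bar\gamma=\frac12$, $\gamma=\frac14$, and the claimed Lipschitz constant at $n=1$ is $15$.
\item Let $g_1=1$ on $P$ and $g_1=\theta$ on $[0,1-\theta)$.
\end{itemize}
Then $A\approx 2\theta$, $B\approx 1$, $\rho_a(P)\approx\frac12$ and $\rho_b(P)\approx 1$. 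Hence, as $\epsilon\to0$, $\dtv{\hat\Phi_1(\eta)-\hat\Phi_1(\mu)}\approx\epsilon/(4\theta)$. Already $\theta=10^{-2}$ gives about $25\,\dtv{\mu-\eta}>15\,\dtv{\mu-\eta}$. The first-step Lipschitz constant is therefore not bounded by any function of $\bar\gamma$ and $\|k\|_\infty$.

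The fix is your first option. Prove \eqref{eqStabHatS} only for $\mu,\eta$ supported in $\mC_0$. This covers $\hat\pi_0$, and hence the constrained filters used in Theorem \ref{thCPFs}. The limit statement for arbitrary $\mu,\eta\in\mP(\mX)$ is still true. Since $\hat\Phi_1(\mu)$ and $\hat\Phi_1(\eta)$ are supported in $\mC_1$, Lemma \ref{lmStability} with $m=1$, applied to $\tilde K$, gives
\[
\dtv{\hat\Phi_{0:n}(\mu)-\hat\Phi_{0:n}(\eta)}\le\frac{(1-\gamma^2)^{n-1}}{\gamma^2}\,\dtv{\hat\Phi_1(\mu)-\hat\Phi_1(\eta)}.
\]
This tends to $0$ because total variation distances between probability measures are bounded.
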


\begin{remark}
While Gaussian kernels are not mixing according to Definition \ref{defMixing}, they satisfy the conditions in Assumption \ref{assModel1} and, hence, can be constrained to yield a stable sequence $\hat \pi_n$, $n \ge 0$.
\end{remark}

%
\subsection{Approximation errors} \label{ssApproxErr}

We seek constrained models $\hat \sS$ for which the posterior marginals $\hat \pi_n$ are close to the marginals $\pi_n$ generated by the original model $\sS$ under (reasonable) regularity assumptions. To this end, we adopt the framework of \cite{Crisan20}. Specifically, let $\{\mC^l\}_{l\ge 1}$ be a sequence where each element $\mC^l = \{ \mC_n^l \}_{n \ge 0}$ is a family of compact subsets of the state space $\mX$ that can be used to constrain the model $\sS$. In this way, we can construct a sequence of constrained models $\hat \sS^l=(\hat \pi_0^l,\hat K^l, g)$, $l \ge 1$, where $\hat K^l = \{\hat K_n^l\}_{n\ge 1}$, 
$$
\hat K_n^l(x',\sd x) = \frac{
    \Ind_{\mC_n^l}(x)K_n(x',\sd x)
}{
    \int_{\mC_n^l} K_n(x',\sd x)
}
\quad
\text{and}
\quad
\hat \pi_0^l(\sd x) = \frac{\Ind_{\mC_0^l}(x)\pi_0(\sd x)}{\int_{\mC_0^l} \pi_0(\sd x)}.
$$
If $\mC^l$, $l \ge 1$ are chosen in such a way that $\lim_{l\to\infty} \mC_n^l = \mX$ for every $n$, then $\lim_{l\to\infty} \hat \sS^l = \sS$ and, by Lemma 2.4 in \cite{Crisan20},
\beq
\lim_{l\to\infty} \hat \pi_n^l = \lim_{l\to\infty} \hat \Phi_{0:n}^l(\hat \pi_0^l) = \pi_n \quad \text{in total variation,}
\nn
\eeq
where the operators $\hat\Phi_n^l$, $n \ge 1$, are constructed in the obvious way given the kernels $\hat K_n^l$ and Eq. \eqref{eqCPUop}. We also denote $\hat\xi_n^l(\sd x)=\hat K_n^l\hat \pi_{n-1}^l(\sd x)$.

An arbitrary constraint $\mC^l=\{\mC_n^l\}_{n\ge 0}$ on $\mX$ may not lead to a ``reasonable'' approximation of the SSM $\sS$. In general, the subsets $\mC_{n}^l$ should be selected to capture a significant mass of the transition probability, i.e., in such a way that, given $x'\in\mC_{n-1}^l$, the transition probability 
\beq
\mbP(X_n \in \mC_n^l| X_{n-1}=x') = \int_{\mC_n^l} K_n(x',\sd x)
\nn
\eeq
can be ensured to be large enough. Intuitively, the sequence of subsets $\{\mC_n^l\}_{n\ge 1}$ should be ``in agreement'' with the (unconstrained) dynamics of the state $X_n$. Hereafter, we abide by Assumption \ref{assCl-2} below, which imposes a relatively strong form of such ``agreement'' between the dynamics of $X_n$ and the subsets $\mC_n^l$. See Section 3 in \cite{Erdogan26} for a more detailed analysis, including weaker assumptions on $\mC^l$.  

\begin{assumption} \label{assCl-2}
The family of compact subsets $\{\mC_n^l\}_{n\ge 0, l\ge 1}$ of the state space $\mX$ is constructed to guarantee that, for every $n\ge 0$, $\lim_{l\to\infty} \mC_n^l = \mX$ and there is a single increasing sequence $\varepsilon^l > 0$ such that
\beq
\pi_0(\mC_0^l) \ge \varepsilon^l, \quad \inf_{x'\in\mC_{n-1}^l} \int_{\mC_n^l} K_n(x',\sd x) \ge \varepsilon^l
\quad \text{and} \quad
\lim_{l\to\infty} \varepsilon^l = 1 ~~\forall n \ge 0.
\nn
\eeq
\end{assumption}

Finally, we assume that the sequence $\pi_n$, $n \ge 1$, generated by the original model $\sS$ is stable. This is done in order to find a bound on the approximation error $| \pi_n(f) - \hat\pi_n^l(f) |$ that holds uniformly over time $n$. We do not assume that kernels $K_n$ to be necessarily mixing, though. A slightly weaker assumption (stability with a possibly non-exponential rate) is sufficient for our purpose.

\begin{assumption} \label{assModel2}
The model $\sS=(\pi_0,K,g) \in \mbS$ yields a stable sequence of optimal filters. In particular, there is a decreasing sequence $\{r_i\}_{i\ge 0}$ such that $r_0=1$, $\sum_{i=0}^\infty r_i < \infty$ and
$$
\dtv{\Phi_{m:n}(\mu)-\Phi_{m:n}(\eta)} \le r_{n-m}\dtv{\mu-\eta}
$$
for any pair of probability measures $\mu,\eta \in \mP(\mX)$.
\end{assumption}

If the original model $\sS$ satisfies the stability Assumption \ref{assModel2} and the constraints $\mC^l$, $l\ge 1$, satisfy Assumption \ref{assCl-2} then the approximation error $| \pi_n(f) - \hat\pi_n^l(f) |$ vanishes, uniformly over discrete time $n$, as $l \to \infty$. Specifically, we have the following result, proved in \cite{Erdogan26} (see Theorem 3.8). 

\begin{theorem} \label{thUniformDTV-1}
Let $\sS=(\pi_0,K,g)$ be a SSM for which Assumption \ref{assModel2} holds. Let $\{\mC^l\}_{l\ge 1}$ be a sequence of constraints of the state space $\mX$ that satisfies Assumption \ref{assCl-2}, and let $\hat \sS^l=(\hat \pi_0^l,K^l,g)$, $l \ge 1$, be the resulting sequence of constrained SSMs. If there are constants $\zeta>0$ and $\| g \|_\infty < \infty$ such that $\inf_{n\ge 1, x\in\mX} g_n(x) > \zeta$ and $\sup_{n \ge 1, x\in \mX} g_n(x) \le \| g \|_\infty$, respectively, then there is a finite constant $c<\infty$, independent of $n$, such that
\beq
\sup_{|f|\le 1} \left| \pi_n(f) - \hat \pi_n^l(f) \right| \le c \frac{1-\varepsilon^l}{\varepsilon^l}.
\label{eqThUnifDTV-1}
\eeq
In particular, $\lim_{l\to\infty} \sup_{n\ge 0} \dtv{\pi_n-\hat\pi_n^l}=0$.
\end{theorem}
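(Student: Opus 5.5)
The plan is a telescoping decomposition combined with the stability in Assumption \ref{assModel2}, which is a standard way of bounding the distance between two filters run with slightly different operators. Fix $l$ and write
\[
\pi_n - \hat\pi_n^l = \Phi_{0:n}(\pi_0) - \Phi_{0:n}(\hat\pi_0^l) + \sum_{k=1}^{n} \Big[ \Phi_{k:n}\big(\Phi_k(\hat\pi_{k-1}^l)\big) - \Phi_{k:n}\big(\hat\Phi_k^l(\hat\pi_{k-1}^l)\big) \Big].
\]
This identity telescopes because $\hat\pi_k^l = \hat\Phi_k^l(\hat\pi_{k-1}^l)$ and $\Phi_{k-1:n} = \Phi_{k:n}\circ\Phi_k$. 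Applying Assumption \ref{assModel2} to each term gives
\[
\dtv{\pi_n-\hat\pi_n^l} \le r_n \dtv{\pi_0-\hat\pi_0^l} + \sum_{k=1}^n r_{n-k} \dtv{\Phi_k(\hat\pi_{k-1}^l) - \hat\Phi_k^l(\hat\pi_{k-1}^l)}.
\]
Because $\sum_i r_i<\infty$, it suffices to show two things: that $\dtv{\pi_0-\hat\pi_0^l}$ is $O\big((1-\varepsilon^l)/\varepsilon^l\big)$, and that the same bound holds for the local one-step error $\dtv{\Phi_k(\mu)-\hat\Phi_k^l(\mu)}$ uniformly in $k$, with the measure $\mu=\hat\pi_{k-1}^l$ supported in $\mC_{k-1}^l$.

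The initial term is elementary. We have $\hat\pi_0^l(f)-\pi_0(f) = \pi_0(f\Ind_{\mC_0^l})\big(1/\pi_0(\mC_0^l)-1\big) - \pi_0(f\Ind_{(\mC_0^l)^c})$. With $|f|\le1$ and $\pi_0(\mC_0^l)\ge\varepsilon^l$, this gives a bound of $2(1-\varepsilon^l)/\varepsilon^l$.

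For the local error, fix $\mu$ supported on $\mC_{k-1}^l$ and compare $K_k\mu$ with $\hat K_k^l\mu$. For $x'\in\mC_{k-1}^l$, write $p(x')=K_k(x',\mC_k^l)\in[\varepsilon^l,1]$. Then
\[
|K_k(fg_k)(x') - \hat K_k^l(fg_k)(x')| \le \|g\|_\infty\Big[(1-p(x')) + \tfrac{1-p(x')}{p(x')}\Big] \le 2\|g\|_\infty \tfrac{1-\varepsilon^l}{\varepsilon^l}.
\]
Integrating against $\mu$ bounds both $|K_k\mu(fg_k)-\hat K_k^l\mu(fg_k)|$ and the corresponding difference of normalising constants $|K_k\mu(g_k)-\hat K_k^l\mu(g_k)|$.

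The ratio is then handled with the usual inequality
\[
\Big|\tfrac{a}{b}-\tfrac{a'}{b'}\Big| \le \tfrac{|a-a'|}{b} + \tfrac{|a'|}{b'}\tfrac{|b-b'|}{b}.
\]
Here $|a'|/b'\le1$ since $|f|\le 1$, and the denominator satisfies $b = K_k\mu(g_k)\ge\zeta$ because $\inf g_k>\zeta$. This lower bound on $g$ is exactly where that hypothesis enters; it is what makes the denominators uniformly bounded below, independently of $k$ and of the measure $\mu$. The result is
\[
\sup_{|f|\le1}\big|\Phi_k(\mu)(f)-\hat\Phi_k^l(\mu)(f)\big| \le 4\tfrac{\|g\|_\infty}{\zeta}\tfrac{1-\varepsilon^l}{\varepsilon^l},
\]
uniformly in $k$.

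Combining the initial and local bounds gives \eqref{eqThUnifDTV-1} with $c = 2 + 4\frac{\|g\|_\infty}{\zeta}\sum_{i\ge0}r_i$, up to the factor relating $\sup_{|f|\le1}$ to the total variation norm. The uniform-in-time limit then follows from $\varepsilon^l\to1$.

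The main obstacle I expect is a subtle point of the local step: the bound on $K_k(x',\cdot)$ versus $\hat K_k^l(x',\cdot)$ uses $\varepsilon^l$ only for $x'\in\mC_{k-1}^l$. I must therefore check that every intermediate measure fed into $\Phi_k$ versus $\hat\Phi_k^l$ is indeed $\hat\pi_{k-1}^l$, which is supported in $\mC_{k-1}^l$ since $\hat K_{k-1}^l$ and $\hat\pi_0^l$ charge only the constraint sets. The telescoping above is arranged precisely so that this holds. No bound of this kind is ever needed for the unconstrained iterates, because those differences are absorbed by the stability of $\Phi_{k:n}$.
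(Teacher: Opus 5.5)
Your proof is correct and complete. The paper does not prove this statement itself; it imports it from Theorem~3.8 of \cite{Erdogan26}. The natural in-paper comparison is therefore the proof of Lemma~\ref{lmConvPF-2}, and your argument follows that template: telescope over time, contract each increment with Assumption~\ref{assModel2}, and bound the one-step errors with denominators kept away from zero by $\inf_n g_n>\zeta$. The only change is that the Marcinkiewicz--Zygmund step is replaced by your deterministic estimate $|K_k(fg_k)(x')-\hat K_k^l(fg_k)(x')|\le 2\|g\|_\infty(1-\varepsilon^l)/\varepsilon^l$ for $x'\in\mC_{k-1}^l$.

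The orientation of your telescoping is the right one for these hypotheses. You contract only with the unconstrained operators $\Phi_{k:n}$, which is all Assumption~\ref{assModel2} provides. You evaluate local errors only at the constrained iterates $\hat\pi_{k-1}^l$, which live in $\mC_{k-1}^l$, the only region where Assumption~\ref{assCl-2} controls $K_k(x',\mC_k^l)$. The reverse ordering would require two things the hypotheses do not give. First, it would need stability of $\hat\Phi^l_{k:n}$ uniformly in $l$; Theorem~\ref{thStabilitySl} only gives a rate depending on $\bar\gamma_l$, and it requires Assumption~\ref{assModel1}. Second, it would need local errors at $\pi_{k-1}$, which is not supported in $\mC_{k-1}^l$.

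Two small tidying remarks:
\begin{itemize}
\item Your hedge about the normalisation is unnecessary. With the paper's convention $\dtv{\mu-\eta}=\frac{1}{2}\sup_{|f|\le1}|\mu(f)-\eta(f)|$ (cf.~\eqref{eqlContract2.0}), the factor $\frac12$ appears on both sides of the stability inequality and cancels. So $c=2+4\|g\|_\infty\zeta^{-1}\sum_{i\ge0}r_i$ works as written, using $r_n\le r_0=1$ for the initial term.
\item You should state explicitly that this $c$ is independent of $l$ as well as of $n$. That independence is what the final claim $\lim_{l\to\infty}\sup_{n\ge0}\dtv{\pi_n-\hat\pi_n^l}=0$ actually uses.
\end{itemize}
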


The assumption $\inf_{n\ge 1, x\in\mX} g_n(x) > \zeta > 0$ is strong, but common in analyses aimed at proving uniform convergence of PFs \cite{DelMoral04,Kunsch05}. If the observations have the form $Y_n=m(X_n)+U_n$, where $m(\cdot)$ is an observation function and $U_n$ is Gaussian noise, then a bounded function $m(\cdot)$ with bounded observations, $\sup_n \| y_n \| < \infty$, yields $\inf_n g_n > \zeta >0$.

%
\subsection{Constrained particle filters} \label{ssCPFs}

Given a model $\sS^l = (\hat \pi_0^l,K^l,g)$ one can apply a standard PF (Algorithm \ref{alPF}) to obtain a particle approximation $\hat \pi_n^{l,N} = \frac{1}{N}\sum_{i=1}^N \delta_{x_n^i}$ of the constrained marginal law $\hat\pi_n^l = \hat\Phi_{0:n}^l(\hat \pi_0^l)$. In this section we investigate whether $\hat \pi_n^{l,N}$ can also be used as an approximation of the optimal (unconstrained) Bayesian filter $\pi_n$. 

We adopt the assumption $\inf_{n\ge 1} g_n>\zeta>0$ in order to obtain uniform-over-time error bounds comparable to the results in \cite{DelMoral01c,DelMoral04,Kunsch05}. Moreover, since the particle estimator $\hat \pi_n^{l,N}$ is a random probability measure, the approximation error $| \pi_n(f) - \hat\pi_n^{l,N}(f)|$ is a r.v.. Hence, we provide bounds for the $L^p$ norms of this error, namely 
\beq
\| \pi_n(f) - \hat\pi_n^{l,N}(f)\|_p = \mbE\left[ | \pi_n(f) - \hat\pi_n^{l,N}(f)|^p \right]^{\frac{1}{p}}
\quad \text{for $p \ge 1$.}
\nn
\eeq

\begin{theorem}\label{thCPFs}
Let $\sS=(\pi_0,K,g)$ be a SSM for which Assumptions \ref{assModel1} and \ref{assModel2} hold and, in addition, there is a constant $\zeta>0$ such that $\inf_{n\ge 1} g_n > \zeta$. Let $\{\mC^l\}_{l\ge 1}$ be a sequence of constraints of the state space $\mX$ that satisfies Assumption \ref{assCl-2} and the inequality
\beq
\inf_{n\ge 1, (x',x)\in\mC_{n-1}^l\times\mC_n^l} k_n(x',x) \ge \bar \gamma_l > 0.
\label{eqThCPFsAsC}
\eeq
Then, for any $p \ge 1$ and $\gamma_l := \frac{\bar\gamma_l^2}{\|k\|_\infty^{2}}$,
\beq
\sup_{n \ge 1, |f|\le 1} \| \pi_n(f) - \hat \pi_n^{l,N}(f) \|_p < c\frac{1-\varepsilon^l}{\varepsilon^l}  + \frac{c_p\|g\|_\infty}{\zeta\gamma_l^3 \sqrt{N}},
\label{eqthCPF1}
\eeq
where $c_p<\infty$ is a constant that depends only on $p$ and $c<\infty$ is the constant in \eqref{eqThUnifDTV-1}. In particular, for every $l<\infty$,
\beq
\lim_{N\to\infty} \sup_{n \ge 1, |f|\le 1} \| \pi_n(f) - \hat \pi_n^{l,N}(f) \|_p < c\frac{1-\varepsilon^l}{\varepsilon^l}
\label{eqthCPF2}
\eeq
and there is a constant $\bar c <\infty$, independent of $N$ and $n$, such that
\beq
\sup_{n\ge 0, |f|\le 1} \lim_{l\to\infty} \| \pi_n(f) - \hat \pi_n^{l,N}(f) \|_p < \frac{\bar c}{\sqrt{N}}.
\label{eqthCPF3}
\eeq
\end{theorem}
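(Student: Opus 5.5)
The natural route is the triangle inequality. For any $|f|\le 1$ we split
\beq
\| \pi_n(f) - \hat \pi_n^{l,N}(f) \|_p \le \left| \pi_n(f) - \hat\pi_n^l(f) \right| + \| \hat\pi_n^l(f) - \hat\pi_n^{l,N}(f) \|_p ,
\nn
\eeq
where the first term is deterministic and the second is the Monte Carlo error of the standard PF (Algorithm \ref{alPF}) run on the constrained model $\hat\sS^l$. The first term is handled directly by Theorem \ref{thUniformDTV-1}. Its hypotheses are Assumption \ref{assModel2}, Assumption \ref{assCl-2}, $\inf g_n>\zeta$ and $\|g\|_\infty<\infty$ (the last from Assumption \ref{assModel1}), and all of these are assumed in Theorem \ref{thCPFs}. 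It therefore yields the bound $c(1-\varepsilon^l)/\varepsilon^l$ uniformly in $n$.

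For the second term I would use the classical uniform-in-time $L^p$ bound for bootstrap filters under mixing kernels (Del Moral / K\"unsch style), applied to $\hat\sS^l$. First, the constrained kernels are mixing in the sense of Definition \ref{defMixing}. Since the PF for $\hat\sS^l$ only ever evaluates $\hat K_n^l$ at $x'\in\mC_{n-1}^l$, it suffices to consider such $x'$. For these, $\hat k_n^l(x',x)=\Ind_{\mC_n^l}(x)k_n(x',x)/Z_n(x')$ with $Z_n(x')=\int_{\mC_n^l}k_n(x',z)\sd z$. Using \eqref{eqThCPFsAsC} and $\|k\|_\infty$, we get $\bar\gamma_l\le k_n(x',x)\le\|k\|_\infty$ on $\mC_{n-1}^l\times\mC_n^l$. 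Taking $u_n=\Ind_{\mC_n^l}/|\mC_n^l|$ then gives $\gamma_l u_n\le \hat k_n^l(x',\cdot)\le u_n/\gamma_l$ with $\gamma_l=\bar\gamma_l^2/\|k\|_\infty^2$; this is exactly the constant of Theorem \ref{thStabilitySl}. Hence Theorem \ref{thStabilitySl} gives stability of $\hat\Phi^l_{m:n}$ at rate $(1-\gamma_l^2)^{n-m}/\gamma_l^2$.

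Next comes the standard telescoping decomposition
\beq
\hat\pi_n^{l,N}-\hat\pi_n^l=\sum_{k=0}^{n}\left[\hat\Phi^l_{k:n}(\hat\pi_k^{l,N})-\hat\Phi^l_{k:n}(\hat\Phi^l_k(\hat\pi_{k-1}^{l,N}))\right],
\nn
\eeq
with the obvious convention at $k=0$. Each one-step error $\hat\pi_k^{l,N}-\hat\Phi_k^l(\hat\pi_{k-1}^{l,N})$ has $L^p$ norm at most $c_p\|g\|_\infty/(\zeta\sqrt N)$ against bounded test functions. This follows from Marcinkiewicz–Zygmund bounds for the sampling and resampling steps, together with the ratio estimate for the weight normalisation, which uses $\zeta\le g_n\le\|g\|_\infty$. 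The main obstacle is propagating these errors through $\hat\Phi^l_{k:n}$. Total variation contraction alone is not enough, because the per-step error is only controlled weakly, in $L^p$ against fixed $f$, not in total variation. So I would write $\hat\Phi^l_{k:n}(\mu)(f)$ as a ratio of the linear semigroup $Q_{k:n}$ and use the mixing property to bound the variation of $Q_{k:n}f/Q_{k:n}1$ by $(1-\gamma_l^2)^{n-k}$. This is a standard Dobrushin-coefficient argument, as in K\"unsch's Lemma 10; it yields a factor of order $(1-\gamma_l^2)^{n-k}/\gamma_l$ multiplying the local error. Summing the geometric series gives a factor $1/\gamma_l^2$, so the total is bounded by $c_p\|g\|_\infty/(\zeta\gamma_l^3\sqrt N)$, with the powers of $\gamma_l$ tracked loosely. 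Combining with the first term proves \eqref{eqthCPF1}.

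Finally, \eqref{eqthCPF2} follows by letting $N\to\infty$ in \eqref{eqthCPF1} with $l$ fixed. For \eqref{eqthCPF3}, fix $n$ and $N$ and let $l\to\infty$: the first term vanishes because $\varepsilon^l\to1$. For the second term, since $\mC_n^l\to\mX$, I would not rely on the $\gamma_l$ bound, which may degenerate. Instead I would use that for fixed $n$ the PF on $\hat\sS^l$ converges in law to the PF on $\sS$ as $\hat K^l_n\to K_n$ in total variation. Then $\lim_l\|\pi_n(f)-\hat\pi_n^{l,N}(f)\|_p=\|\pi_n(f)-\pi_n^N(f)\|_p$, and the known uniform bound $C_p/\sqrt N$ for the standard PF under Assumption \ref{assModel2} and $\inf g>\zeta$ gives $\bar c/\sqrt N$. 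Justifying this last uniform bound from non-exponential stability is the delicate point. If needed, I would instead repeat the telescoping argument for $\sS$ using the summable rates $r_i$ of Assumption \ref{assModel2}.
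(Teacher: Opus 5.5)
Your proof follows the paper's route. Both use the same triangle inequality, Theorem~\ref{thUniformDTV-1} for the bias, and a Del Moral-type uniform bound for the filter on $\hat\sS^l$. The paper only checks the ratio condition \eqref{eqForPierre} and cites Theorem 7.4.4 in \cite{DelMoral04}; you verify Definition~\ref{defMixing} with $u_n=\Ind_{\mC_n^l}/|\mC_n^l|$ and sketch that theorem's proof. For \eqref{eqthCPF3}, both reduce to the unconstrained filter and then invoke a uniform bound for it.

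On the exponent of $\gamma_l$: your computation actually gives the mixing constant $\bar\gamma_l/\|k\|_\infty=\gamma_l^{1/2}$. So even a crude count, $\gamma^{-2}$ from $\sup_x Q_{k,n}1(x)/\mu(Q_{k,n}1)$ and $\gamma^{-2}$ from the geometric sum, gives $\gamma_l^{-2}\le\gamma_l^{-3}$, and the exponent in \eqref{eqthCPF1} is safe.

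For \eqref{eqthCPF3} your version is cleaner than the paper's. The paper inserts $\pi_n^N$ via \eqref{eqpf_proof3.5} and appeals to dominated convergence in \eqref{eqpf_proof5}. That step tacitly needs $\hat\pi_n^{l,N}(f)\to\pi_n^N(f)$ on a common probability space, which is never built, and the paper ends with a factor $3$. Your argument fills this in and drops the factor $3$:
\begin{itemize}
\item $\pi_n(f)$ is deterministic, so the $L^p$ error is a bounded functional of the law of the particle system;
\item that law converges in total variation, because $\hat\pi_0^l\to\pi_0$ and $\hat K_n^l(x',\cdot)\to K_n(x',\cdot)$ for every $x'$, with identical weighting and resampling.
\end{itemize}

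The piece you leave open is exactly the paper's Lemma~\ref{lmConvPF-2}. The paper proves it by your fallback, telescoping with the rates $r_i$, and your worry about that fallback is justified. To go from \eqref{eqlContract2.3} to \eqref{eqlContract2.4}, the paper bounds the $f$-dependent difference in the first numerator by $\|f\|_\infty$ times the same difference at $f\equiv1$. That inequality does not hold in general, so the supremum over $f$ is not actually controlled.

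The way out is that Assumption~\ref{assModel2} holds for all initial laws, in particular for Dirac masses. Write $\Phi_{k:n}(\mu)(f)=\mu(Q_{k,n}f)/\mu(Q_{k,n}1)$ and $P_{k,n}f(x)=\Phi_{k:n}(\delta_x)(f)$. Then:
\begin{itemize}
\item $\mathrm{osc}(P_{k,n}f)\le 2r_{n-k}\|f\|_\infty$;
\item from $\nu(Q_{k,n}1)=\prod_{j=k+1}^n\Phi_{k:j-1}(\nu)(K_j(g_j))$ with $\zeta\le K_j(g_j)\le\|g\|_\infty$, one gets $\sup_x Q_{k,n}1(x)/\nu(Q_{k,n}1)\le\exp(2\|g\|_\infty\zeta^{-1}\sum_i r_i)$, uniformly in $k\le n$ and $\nu$.
\end{itemize}
Set $\eta=\Phi_k(\pi_{k-1}^N)$, $c=\Phi_{k:n}(\eta)(f)$ and $h=Q_{k,n}1\,(P_{k,n}f-c)$. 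Then $\eta(h)=0$, hence
\[
\Phi_{k:n}(\pi_k^N)(f)-\Phi_{k:n}(\eta)(f)=\frac{(\pi_k^N-\eta)(h)}{\pi_k^N(Q_{k,n}1)},
\]
where $h/\|Q_{k,n}1\|_\infty$ depends only on the particles up to time $k-1$ and has sup norm at most $2r_{n-k}\|f\|_\infty$. The conditional Marcinkiewicz--Zygmund bound then applies to a single function, and summing the $r_{n-k}$ gives $\bar c/\sqrt N$. This is your Dobrushin argument for $\hat\sS^l$ with $r_{n-k}$ in place of the geometric rate. With it, your proof is complete.
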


See Sections \ref{ssAncillary} and \ref{ssCPFs} for a complete proof. Theorem \ref{thCPFs} holds even if $T\to\infty$ and $M\to\infty$. From inequalities \eqref{eqThUnifDTV-1} and \eqref{eqthCPF2} we see that the approximation error $\| \pi_n(f) - \hat \pi_n^{l,N}(f) \|_p$ can be made comparable to the error $| \pi_n(f) - \hat\pi_n^l(f) |$ when $N$ is sufficiently large. In this sense, the constrained PF yields an approximation of the unconstrained optimal filter $\pi_n$.

%
\subsection{Discretisation errors} \label{ssDiscretisation}

The implementation of a standard PF for a model $\sS=(\pi_0,K,g)$ relies on the assumption that one can at least generate samples from the distributions $K_n(x',\sd x)$ exactly. However, the kernels induced by a nonlinear SDE can only be simulated using numerical procedures \cite{Kloeden95,Kloeden12}. To be specific, let us assume that we implement a numerical scheme to draw approximate samples from $K_n(x',\sd x)$. Such numerical scheme implicitly defines an approximate kernel that we denote as $K_n^h(x', \sd x)$, where $h>0$ indicates the time-step of the simulation. For example, a simple Euler-Maruyama scheme yields
\beq
\tilde X_{n-1,0} = x', \quad \text{and} \quad
\tilde X_{n-1,j} = \tilde X_{n-1,j-1} + h~a(\tilde X_{n-1,j-1} ) + \sqrt{h} s(\tilde X_{n-1,j-1}) Z_j, \nn
\eeq
for $j = 1, \ldots, J$, where $J=(t_n-t_{n-1})/h$ and $Z_j$ is a sequence of i.i.d. $\mN(0,I)$-distributed r.v.s. The variate $X_n^h := \tilde X_{n-1,J}$ is a draw from the approximate kernel $K_n^h(x',\sd x)$.

In most practical problems, we can only work with the approximate SSM $\sS^h = (\pi_0, K^h, g)$, where $K^h = \{ K_n^h\}_{n \ge 1}$, rather than the {\em true} model $\sS=(\pi_0,K,g)$. As a consequence, we also have approximations of the optimal filter and the predictive distributions, denoted $\pi_n^h$ and $\xi_n^h$, respectively. The PU operator for model $\sS^h$ has the form
\beq
\Phi_n^h(\mu)(f) = \frac{K_n^h\mu(fg_n)}{K_n^h\mu(g_n)}, 
~~ \text{hence} ~~
\pi_n^h = \Phi_n^h(\pi_{n-1}^h) ~~\text{and}~~ \pi_n^h=\Phi_{m:n}(\pi_m^h)
\nn
\eeq
with obvious notation. Note that $\pi_0^h = \pi_0$.

A natural question is whether we can quantify the discrepancy between the optimal filter $\pi_n$ and its approximation $\pi_n^h$. 
To address this problem, let us assume that the numerical scheme used to simulate the SDE \eqref{eq:sde} converges weakly \cite{Kloeden95}, with order at least 1. This is made precise by Assumption \ref{asW} below.

\begin{assumption} \label{asW}
Let $C_B^4(\mbR^{d_x})$ be the space of real, continuous and bounded functions with uniformly bounded derivatives up to order 4. For any $v\in C_B^4(\mbR^{d_x})$ and any $x'\in\mX$,
$$
\left|
    \int v(x)K_n(x',\sd x) - \int v(x)K_n^h(x',\sd x)
\right| \le \bar C_n h, ~~n = 1, \ldots, M,
$$
where every $\bar C_n$ is a finite constant independent of $x'$ and $h$.
\end{assumption}

\begin{remark}
Weak convergence of the simulation scheme that yields the approximate kernel $K_n^h$ typically requires some regularity of both the drift $a(\cdot,t)$ function and the diffusion $s(\cdot,t)$ coefficient, e.g., sufficient smoothness and (at most) polynomial growth \cite{Kloeden95}.
\end{remark}

\begin{theorem} \label{thMarginals}
Let $Y_{1:M} = y_{1:M}$ be an arbitrary but fixed sequence of observations, let Assumption \ref{asW} hold and choose a test function $f \in C_B^4(\Real^{d_x})$ with $\|f\|_\infty \le 1$. If, for every $n=1, \ldots, M$,
\begin{itemize}
\item $g_n>0$, $\| g_n \|_\infty \le 1$ and $g_n \in C_B^4(\Real^{d_x})$, and 
\item $\bar f_n \in C_B^4(\Real^{d_x})$, where
$
\bar f_n(x') = \int f(x) K_n(x',\sd x),
$
\end{itemize}
then there are finite constants $\{ \tilde C_n, C_n \}_{n=1}^M$, independent of $h$, such that
\beq
\left|
    \xi_n(f) - \xi_n^h(f)
\right| \le \tilde C_n h 
\quad \text{and} \quad 
\left|
    \pi_n(f) - \pi_n^h(f)
\right| \le C_n h \quad \text{for $n=1, \ldots, M$}.
\nn
\eeq
\end{theorem}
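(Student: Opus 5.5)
The plan is an induction on $n$ that alternates a prediction step and an update step. Throughout I treat the error bounds as a statement about the whole class $\{f\in C_B^4(\Real^{d_x}) : \|f\|_\infty\le 1\}$, with constants allowed to depend on $f$ only through finitely many $C_B^4$ norms. The base case is trivial, because $\pi_0^h=\pi_0$.

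\textbf{Prediction step.} Since $\xi_n=K_n\pi_{n-1}$ and $\xi_n^h=K_n^h\pi_{n-1}^h$, I add and subtract $\pi_{n-1}^h(\bar f_n)$:
$$
\xi_n(f)-\xi_n^h(f) = \left[\pi_{n-1}(\bar f_n)-\pi_{n-1}^h(\bar f_n)\right] + \int \left[\int f(x)K_n(x',\sd x) - \int f(x)K_n^h(x',\sd x)\right]\pi_{n-1}^h(\sd x').
$$
The integrand of the second term is bounded by $\bar C_n h$ uniformly in $x'$ (Assumption \ref{asW}), and $\pi_{n-1}^h$ is a probability measure, so this term is at most $\bar C_n h$. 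The first term is handled by the induction hypothesis at time $n-1$ applied to the test function $\bar f_n$. This is legitimate because $\bar f_n\in C_B^4$ by hypothesis and $\|\bar f_n\|_\infty\le\|f\|_\infty\le 1$. The result is $\tilde C_n = C_{n-1}(\bar f_n)+\bar C_n$, with $C_0\equiv 0$; for $n=1$ this is simply $\tilde C_1=\bar C_1$.

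\textbf{Update step.} Write $\pi_n(f)=a/b$ and $\pi_n^h(f)=a'/b'$, where $a=\xi_n(fg_n)$, $b=\xi_n(g_n)$, and $a',b'$ are the same quantities under $\xi_n^h$. The identity
$$
\frac{a}{b}-\frac{a'}{b'}=\frac{a-a'}{b}+\frac{a'}{b'}\,\frac{b'-b}{b}
$$
together with $|a'/b'|=|\pi_n^h(f)|\le 1$ gives
$$
|\pi_n(f)-\pi_n^h(f)|\le \frac{|\xi_n(fg_n)-\xi_n^h(fg_n)|+|\xi_n(g_n)-\xi_n^h(g_n)|}{\xi_n(g_n)}.
$$
Three facts make this usable.
\begin{itemize}
\item The denominator $\xi_n(g_n)$ is a fixed positive number independent of $h$, because $g_n>0$ and $y_{1:M}$ is fixed. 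No smallness condition on $h$ is needed, and we never need to bound $\xi_n^h(g_n)$ from below.
\item Both $fg_n$ and $g_n$ lie in $C_B^4$ with sup norm at most $1$, since $C_B^4$ is an algebra and $\|g_n\|_\infty\le 1$.
\item The prediction step therefore applies to both of them.
\end{itemize}
This yields $C_n=(\tilde C_n(fg_n)+\tilde C_n(g_n))/\xi_n(g_n)$.

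\textbf{Main obstacle.} The recursion does not close on $f$ alone. The update step at time $n$ needs prediction bounds for $fg_n$ and $g_n$. The prediction step for these needs the filter bound at time $n-1$ for $\overline{(fg_n)}_n$ and $\bar g_n$. That in turn needs the kernel $K_{n-1}$ applied to products such as $\bar g_n\,g_{n-1}$ to lie in $C_B^4$, and so on back to time $0$.

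So the real content of the hypothesis "$\bar f_n\in C_B^4$" is that the chain of test functions generated backwards from $f$ stays in $C_B^4$. This chain consists of alternately multiplying by $g_k$ and integrating against $K_k$. I would make this explicit first: I would state the induction hypothesis as "for every $\varphi$ in the finite family of functions generated from $f$ by these operations, $\varphi\in C_B^4$ and the bounds hold". The cleanest reading is that each $K_n$ maps $C_B^4$ into $C_B^4$, which is the usual consequence of smooth, polynomially bounded coefficients (cf.\ the remark after Assumption \ref{asW}).

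Once this bookkeeping is in place, the argument is the two elementary estimates above. Because $M$ is finite, the resulting constants $\tilde C_n, C_n$ are finite and independent of $h$.
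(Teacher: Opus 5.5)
Your proposal follows essentially the same route as the paper: induction on $n$, with the prediction step obtained from Assumption~\ref{asW} plus the induction hypothesis applied to $\bar f_n$, and the update step from the ratio decomposition over the $h$-independent denominator $\xi_n(g_n)$. Where it differs, it is more careful than the written proof: you bound $\xi_n(fg_n)-\xi_n^h(fg_n)$ by applying the prediction estimate to $fg_n\in C_B^4(\Real^{d_x})$, whereas the paper's step $|\xi_n^h(g_nf)-\xi_n(g_nf)|\le\|f\|_\infty|\xi_n(g_n)-\xi_n^h(g_n)|$ is not valid in general (it cannot hold for all $f$ when $\xi_n(g_n)=\xi_n^h(g_n)$ but $g_n\xi_n\neq g_n\xi_n^h$); and you correctly make explicit that the induction only closes if the whole backward chain of test functions ($K_n(fg_n)$ and $K_n g_n$, then their products with $g_{n-1}$ pushed through $K_{n-1}$, and so on) stays in $C_B^4(\Real^{d_x})$, which the paper relies on tacitly when it applies its induction hypothesis to $\bar f_n$.
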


See Section \ref{ssThDiscretisation} for a proof.

We can build on Theorem \ref{thMarginals} to compute bounds for the combined errors due to time discretisation, the constraint $\mC^l = \{\mC_n^l\}_{n\ge 0}$ and the particle approximation generated by the standard PF. Let $\sS=(\pi_0,K,g)$ be the initial model, let $\sS^h=(\pi_0,K^h,g)$ be the model with a discretised kernel and let $\sS^{h,l} = (\hat\pi_0^l,K^{h,l},g)$ be the constrained version of $\sS^h$. Model $\sS^{h,l}$ yields the marginals $\hat\pi_n^{h,l} = \hat\Phi_{0:n}^{h,l}(\hat\pi_0^l)$. Assume we run Algorithm \ref{alPF} on model $\sS^{h,l}$ to obtain the Monte Carlo approximation $\hat\pi_n^{h,l,N} = \frac{1}{N}\sum_{i=1}^N \delta_{x_n^i}$. The following result is rather straightforward.

\begin{corollary} \label{corDiscreteTruncate}
Let $Y_{1:M} = y_{1:M}$ be an arbitrary but fixed sequence of observations, let Assumption \ref{asW} and Assumption \ref{assCl-2} hold for the models $\sS$ and $\left\{ \sS^{h,l} \right\}_{l\ge 1}$, and choose a test function $f \in C_B^4(\Real^{d_x})$ with $\| f \|_\infty \le 1$. If, for every $n=1, \ldots, M$,
\begin{itemize}
\item $g_n>0$, $\| g_n \|_\infty \le 1$ and $g_n \in C_B^4(\Real^{d_x})$, and 
\item $\bar f_n \in C_B^4(\Real^{d_x})$, where
$
\bar f_n(x') = \int f(x) K_n(x',\sd x),
$
\item and we choose $h \le N^{-\frac{1}{2}}$,
\end{itemize}
then there are finite constants $\{ \tilde c_n, \bar c_n \}_{n=1}^M$, independent of $N$ and $h$, such that
\beq
\| \pi_n(f) - \hat \pi_n^{h,l,N}(f) \|_p \le \frac{
	\tilde c_n
}{
	\sqrt{N}
} + \bar c_n \frac{1-\varepsilon^l}{\varepsilon^l}, ~~\text{for $n=1, \ldots, M$}.
\nn
\eeq
 
\end{corollary}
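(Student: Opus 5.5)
We split the error with the triangle inequality into three pieces:
\beq
\| \pi_n(f) - \hat \pi_n^{h,l,N}(f) \|_p \le |\pi_n(f) - \pi_n^h(f)| + |\pi_n^h(f) - \hat\pi_n^{h,l}(f)| + \| \hat\pi_n^{h,l}(f) - \hat\pi_n^{h,l,N}(f)\|_p,
\nn
\eeq
namely a discretisation error, a constraint error and a Monte Carlo error. Each is handled by an earlier result, or by a finite-horizon version of it. Since $n \le M < \infty$, the constants may depend on $n$, so no stability or mixing argument is needed.

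\textbf{Discretisation term.} By Theorem \ref{thMarginals}, under exactly the listed hypotheses on $g_n$, $f$ and $\bar f_n$ together with Assumption \ref{asW}, we have $|\pi_n(f)-\pi_n^h(f)| \le C_n h$. The choice $h \le N^{-\frac12}$ turns this into $C_n/\sqrt{N}$.

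\textbf{Constraint term.} We compare $\pi_n^h=\Phi^h_{0:n}(\pi_0)$ with $\hat\pi_n^{h,l}=\hat\Phi^{h,l}_{0:n}(\hat\pi_0^l)$. The uniform Theorem \ref{thUniformDTV-1} is not available here, because it needs stability and a lower bound on $g_n$. Instead we use the finite-horizon telescoping argument behind it:
\beq
\pi_n^h - \hat\pi_n^{h,l} = \Phi^h_{0:n}(\pi_0)-\Phi^h_{0:n}(\hat\pi^l_0) + \sum_{k=1}^n \left[ \Phi^h_{k:n}\big(\Phi^h_k(\hat\pi^{h,l}_{k-1})\big) - \Phi^h_{k:n}\big(\hat\Phi^{h,l}_k(\hat\pi^{h,l}_{k-1})\big) \right].
\nn
\eeq
First, $\dtv{\pi_0-\hat\pi_0^l} \le 2(1-\varepsilon^l)$. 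Second, whenever $x'$ lies in $\mC_{k-1}^l$, which is where $\hat\pi_{k-1}^{h,l}$ is supported, Assumption \ref{assCl-2} (applied to $K^h$) gives $\dtv{K^h_k(x',\cdot)-\hat K^{h,l}_k(x',\cdot)} \le 2\frac{1-\varepsilon^l}{\varepsilon^l}$. Third, each $\Phi^h_m$ is Lipschitz in total variation on $\mP(\mX)$ with a constant $L_m$ that depends on the fixed data $y_{1:M}$. This holds because $g_m>0$ is continuous and $\|g_m\|_\infty\le 1$, so $\mu \mapsto K^h_m\mu(g_m)$ is bounded away from zero in a neighbourhood of the relevant measures.

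Combining these three facts gives $\dtv{\pi_n^h-\hat\pi_n^{h,l}} \le \bar c_n\frac{1-\varepsilon^l}{\varepsilon^l}$ with $\bar c_n$ a finite sum of products of the $L_m$.

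\textbf{Main obstacle.} The hardest step is making the Lipschitz constants $L_m$ independent of $h$ and $l$. The naive bound $\dtv{\Phi(\mu)-\Phi(\eta)}\le \frac{2\|g\|_\infty}{K\mu(g)}\dtv{\mu-\eta}$ requires a lower bound on the normalising constants $K^h_m\mu(g_m)$. I would first obtain this from Theorem \ref{thMarginals}: the predictive constants $\xi_m^h(g_m)$ converge to $\xi_m(g_m)>0$ as $h\to0$, so they are bounded below for small $h$. I would then argue that the constrained versions differ from these by $O(1-\varepsilon^l)$, which gives a uniform positive lower bound for $l$ large; the finitely many remaining $(h,l)$ are absorbed into the constant.

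\textbf{Monte Carlo term.} Standard finite-horizon $L^p$ bounds for the bootstrap filter \cite{DelMoral04,Bain08,Miguez13b} give $\|\hat\pi_n^{h,l}(f)-\hat\pi_n^{h,l,N}(f)\|_p \le c'_n/\sqrt{N}$. Here $c'_n$ depends only on $p$, $n$ and the normalising constants $\hat\xi^{h,l}_m(g_m)$, which the same lower bound keeps uniform in $h$ and $l$.

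Collecting the three pieces gives the claim with $\tilde c_n = C_n + c'_n$.
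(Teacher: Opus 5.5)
Your three-way split is the paper's, and you handle the discretisation term (Theorem~\ref{thMarginals} with $h\le N^{-1/2}$) and the Monte Carlo term (a finite-horizon bound such as Lemma~1 of \cite{Miguez13b}) the same way. The only difference is the middle term. The paper simply imports $|\pi_n^h(f)-\hat\pi_n^{h,l}(f)|\le \bar c_n\frac{1-\varepsilon^l}{\varepsilon^l}$ from Corollary~3.6 of \cite{Erdogan26}, applied to $\sS^h$. It never checks that this constant, or the Monte Carlo constant (called $\breve c_n$ there, and a priori dependent on $\sS^{h,l}$), is uniform in $h$ and $l$. Your telescoping argument reconstructs that citation. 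Your concern about uniformity is legitimate and goes beyond what the paper's proof addresses.

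Two steps of that extra work need repair.

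\textbf{Local, not global, Lipschitz constants.} Without $\inf_x g_m(x)>0$, the map $\Phi_m^h$ is not Lipschitz on all of $\mP(\mX)$: take $\mu=\delta_{x'}$ with $K_m^h\delta_{x'}(g_m)\to 0$. You only have the local constant $2\|g_m\|_\infty/K_m^h\nu(g_m)$ at a given $\nu$. In your telescoping sum the relevant $\nu$ are the mixed measures $\Phi^h_{k:m-1}(\hat\pi_k^{h,l})$. Comparing $K_m^h\Phi^h_{k:m-1}(\hat\pi_k^{h,l})(g_m)$ with $\xi_m^h(g_m)$ costs the constraint error at the earlier time $k<n$, so the argument must be run as an induction on $n$. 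Also, the convergence $\xi_m^h(g_m)\to\xi_m(g_m)$ you need is \eqref{eqP16} in the proof of Theorem~\ref{thMarginals}, not the theorem's statement, which only concerns the fixed $f$.

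\textbf{The ``finitely many remaining $(h,l)$'' step.} This claim is wrong. The complement of the good regime $\{h<h_0,\ l\ge l_0\}$ contains every $l$ when $h\ge h_0$, and infinitely many $h$ when $l<l_0$. So there is no finite set to take a maximum over. The fix is elementary and uses only that the left-hand side is at most $2\|f\|_\infty\le 2$:
\begin{itemize}
\item If $h\ge h_0$, then $N^{-1/2}\ge h\ge h_0$, so the whole left-hand side is at most $(2/h_0)N^{-1/2}$.
\item If $l<l_0$, then since $\varepsilon^l$ is increasing, $\frac{1-\varepsilon^l}{\varepsilon^l}\ge\frac{1-\varepsilon^{l_0}}{\varepsilon^{l_0}}$. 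Hence the left-hand side is at most $\frac{2\varepsilon^{l_0}}{1-\varepsilon^{l_0}}\cdot\frac{1-\varepsilon^l}{\varepsilon^l}$.
\end{itemize}
After enlarging $\tilde c_n$ and $\bar c_n$ accordingly, only the regime $h<h_0$, $l\ge l_0$ needs your lower bounds on the normalising constants. There those bounds do hold, via the induction just described, applied to both the mixed constants and the constrained constants $\hat\xi_m^{h,l}(g_m)$ that enter the Monte Carlo term.
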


\begin{proof}
An iterated triangle inequality yields
\beq
\| \pi_n(f) - \hat \pi_n^{h,l,N}(f) \|_p \le
| \pi_n(f) - \pi_n^h(f) | +
| \pi_n^h(f) - \hat \pi_n^{h,l}(f) | + 
\| \hat \pi_n^{h,l}(f) - \hat \pi_n^{h,l,N}(f) \|_p. 
\nn
\eeq
From Theorem \ref{thMarginals} we obtain 
\beq
| \pi_n(f) - \pi_n^h(f) | \le C_n h \le \frac{C_n}{\sqrt{N}};
\label{eqITI1}
\eeq
from Corollary 3.6 in \cite{Erdogan26} we have
\beq
| \pi_n^h(f) - \hat \pi_n^{h,l}(f) | \le \bar c_n \frac{1-\varepsilon^l}{\varepsilon^l}
\label{eqITI2}
\eeq
for some $\tilde c_n<\infty$, and Lemma 1 in \cite{Miguez13b} yields
\beq
\| \hat \pi_n^{h,l}(f) - \hat \pi_n^{h,l,N}(f) \|_p \le \frac{\breve c_n \|f\|_\infty}{\sqrt{N}}
\label{eqITI3}
\eeq
for some $\breve c_n<\infty$. Combining the inequalities \eqref{eqITI1}, \eqref{eqITI2} and \eqref{eqITI3} completes the proof, with constant $\tilde c_n = C_n + \breve c_n\|f\|_\infty$. \qed
\end{proof}

Note that stability of either $\pi_n$ or $\pi_n^h$ is {\em not} required for Corollary \ref{corDiscreteTruncate} to hold. 

%
\section{A numerical example} \label{sNumerics}

%
\subsection{Stochastic Lorenz 96 model} 

To illustrate numerically both the analysis in Section \ref{sConstrained} and the methodology described in Section \ref{sConstrained}, we consider the $\dx$-dimensional stochastic Lorenz 96 model with additive noise \cite{Perez-Vieites18,Grudzien20} described by the system of SDEs
\beq
\sd \X_i(t) = \left[ \X_{i-1}(t) \left(  \X_{i+1}(t) - \X_{i-2}(t)  \right) - \X_i(t) + F\right] dt + \sigma_x  \sd \W_i(t), \quad i=0, \ldots, \dx-1,
\label{eqL96}
\eeq
with periodic boundaries given by 
\beq
\X_{-2}(t) = \X_{\dx-2}(t),
\quad \X_{-1}(t) = \X_{\dx-1}(t),
\quad  \text{and} \quad  \X_0(t) = \X_{\dx}(t), 
\nn
\eeq
where the forcing constant is $F=8$ (which yields turbulent dynamics), $\W_i(t)$, $i=0, \ldots, \dx-1$, are standard Wiener processes and $\sigma_x = \sqrt{1/2}$ is a constant. The complete state is the $\dx \times 1$ vector $\X(t)=\left[ \X_0(t), \ldots, \X_{\dx-1}(t) \right]^\top$. 

We assume a linear and Gaussian observation model, namely 
\beq
Y_n = H^\top X_n + U_n,
\label{eqObsEq}
\eeq
where $U_n \sim \mN(0,\sigma_y^2 I_{d_y})$, $\sigma_y^2=1/2$ and $H$ is a $\dx \times \dy$ matrix and $X_n=\X(t_n)$. For each independent simulation trial, we generate $H$ randomly as
\beq
H = \left[
 	e_{m_{1}}, e_{m_{2}}, \ldots, e_{m_{\dy}} 
 \right] + V 
\nn
\eeq
where $e_{m}$ is a $\dx \times 1$ vector of 0s with a single value of 1 in the $m$-th entry, $V$ is a $\dx \times \dy$ matrix whose entries are independent Gaussian random variates, namely $V_{i,j} \sim \mN(0,\sigma_v^2)$ and $\sigma_v = 5 \times 10^{-4}$. The indices $m_1, \ldots, m_{d_y}$ are drawn randomly from the set $\{1, \ldots, \dx\}$, with uniform probabilities and no replacement. Intuitively, the $i$-th observation $Y_i$, $i\in\{1, \ldots, \dy\}$, corresponds to a linear combination of all the state variables, with $X_{m_i}$ being dominant and all other variables causing (individually small) interference in the measurement.
 
We approximate the ground-truth states $X_n$ in our experiments below by generating a discrete-time sequence $\X_l \approx \X(t_l')$, where $t_l'= l h$, by way of the standard Euler-Maruyama scheme with step size $h=10^{-3}$. We assume that observations are collected every $h_o=0.1$ continuous time units. Hence, $X_n = \X(nh) \approx \X_{\frac{h_o}{hn}} = \X_{100n}$. The Euler-Maruyama method is used in the numerical implementation of all the algorithms in this section.

%
\subsection{Numerical results} 

In this section we illustrate numerically the effect of constraining the state space on the approximation of the optimal filter by a standard PF. We do not attempt to evaluate the errors due to time discretisation (i.e., we ignore the discrepancy between $K_n$ and $K_n^h$ and assume that $K_n^h=K_n$). 

We consider a single constraint $\mC = \{\mC_n\}_{n=0}^M$, where $\mC_0$ is a hypercube with side length $3\sigma_x$ and every $\mC_n$ is a superlevel set for the log-likelihood $g_n$, namely,
$$
\mC_n = \{ x \in \mbR^{d_x} : \log g_n(x) > -8 \}.
$$
The constraint is implemented using a barrier function as described  in \cite{Erdogan26} (see Sections 4 and 5.2 therein for details).

Figure \ref{fNMSE} illustrates the performance of a standard PF, an auxiliary PF \cite{Pitt01} and a constrained standard PF implemented with a barrier function when the dimension of the Lorenz 96 model increases from $d_x = 50$ to $d_x = 400$ and the number of particles is kept fixed, $N=100$. The figure displays the normalised mean square error (MSE), computed as 
$$
NMSE = \frac{\sum_{n=1}^M \| X_n - \hat X_n \|^2}{\sum_{n=1}^M \| X_n \|^2}, 
$$
where $X_n$ is the true state and $\hat X_n$ is its estimate (i.e., NMSE above is the power of the approximation error normalised by the power of the state signal). The values in the plot are the average of 30 independent simulation runs and the shaded areas correspond to one standard deviation. This experiment shows that the PFs running on the unconstrained model fail to track the Lorenz 96 system even for $d_x=50$, while the constrained PF robustly tracks the system state, even for $d_x=400$, with just $N=100$ particles.

\begin{figure}[htb]
\centerline{
    \includegraphics[width=0.6\linewidth]{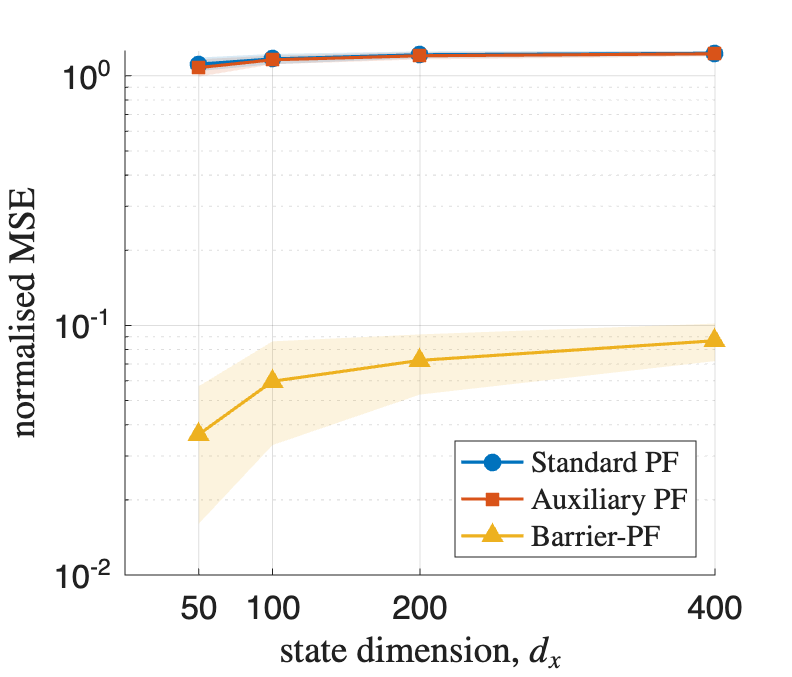}
}
\caption{Normalised MSE versus state dimension $d_x$. Each value in the plot is the average of 30 independent simulation runs. The shaded areas correspond to one standard deviation. The number of particles is fixed, $N=100$.} 
\label{fNMSE}
\end{figure}

Figure \ref{fPdfs} displays the marginal posterior densities of four state variables ($X_1$, $X_{10}$, $X_{50}$ and $X_{60}$) at time $t_{75}$ for one simulation of the constrained PF with $d_x=400$ and $N=100$. The pdfs are kernel density estimators computed from the particles using the matlab function {\tt ksdensity}. We have selected illustrative cases where variables are estimated very accurately ($X_1$ and $X_{50}$) and cases where the error is larger ($X_{10}$ and $X_{60}$). This simulation shows that the constrained PF remains locked to the system state and the particle sets have sufficient diversity to provide useful approximations of the marginal densities.

\begin{figure}[htb]
\centerline{
    \includegraphics[width=0.43\linewidth]{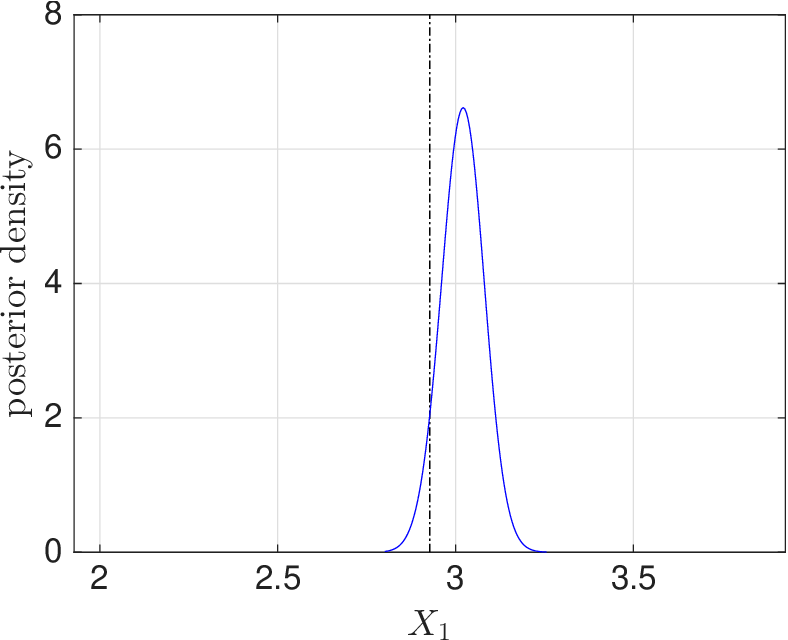}\quad
    \includegraphics[width=0.43\linewidth]{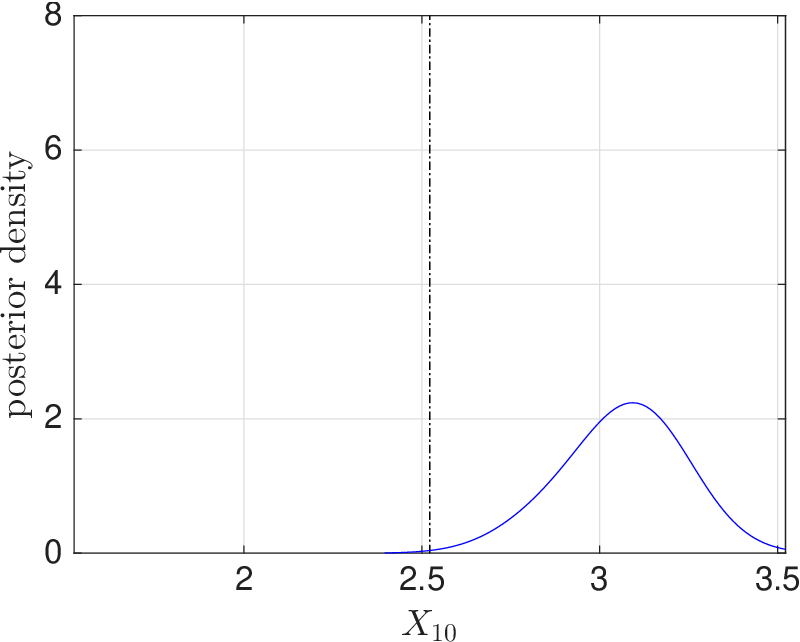}
}
\centerline{
    \includegraphics[width=0.43\linewidth]{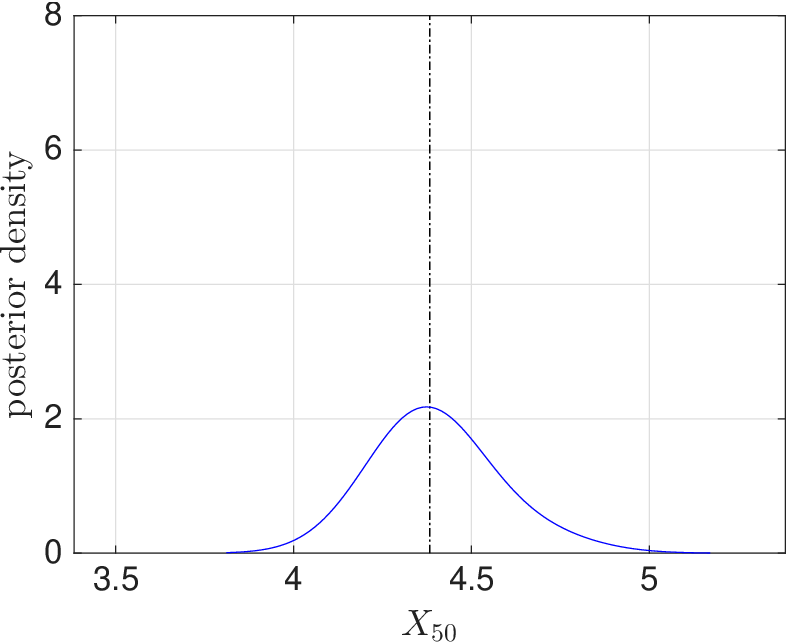}\quad
    \includegraphics[width=0.43\linewidth]{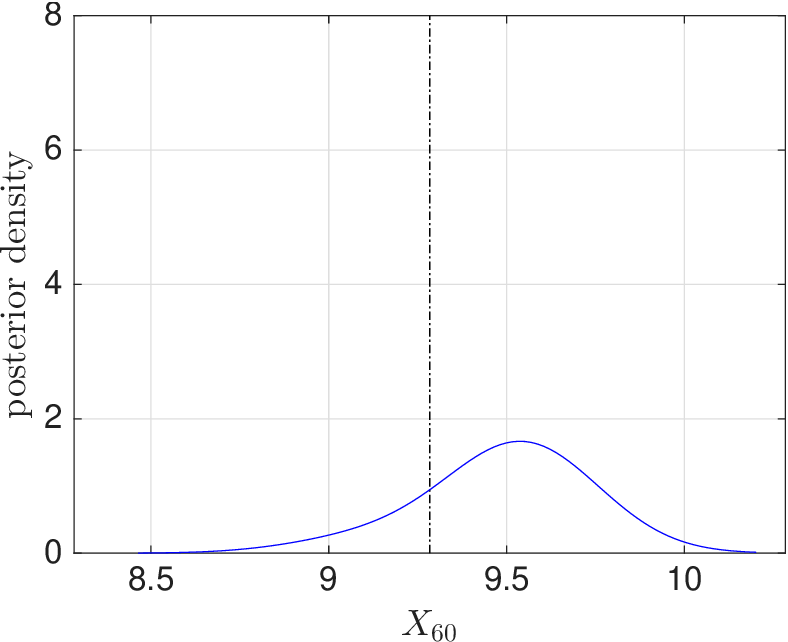}
}
\caption{Marginal posterior pdfs of four state variables ($X_1$, $X_{10}$, $X_{50}$ and $X_{60}$) computed with the constrained PF in one single simulation, with $d_x=400$ and $N=100$.}
\label{fPdfs}
\end{figure}

%
\section{Proofs} \label{sProofs}

%
\subsection{An ancillary result} \label{ssAncillary}
Let $\pi_n^N(\sd x)=\frac{1}{N}\sum_{i=1}^N \delta_{x_n^i}(\sd x)$ be the approximation of $\pi_n$ constructed by way of the standard particle filter.
 
\begin{lemma} \label{lmConvPF-2}
Let Assumptions \ref{assModel1} and \ref{assModel2} hold. If, in addition, there is some $\zeta>0$ such that $\inf_{n\ge 1} g_n > \zeta$, then, for any $p\ge 1$,
\beq
\sup_{n\ge 1, |f|\le 1} \| \pi_n(f) - \pi_n^N(f) \|_p \le \frac{\bar c}{\sqrt{N}}
\nn
\eeq
for some constant $\bar c<\infty$ independent of $N$.
\end{lemma}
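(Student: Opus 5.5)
We follow the standard telescoping (Del Moral / Künsch) decomposition of the particle error into local sampling errors, each of which is propagated by the exact PU operators and then damped by the stability in Assumption \ref{assModel2}. Write $\pi_k^N = \frac{1}{N}\sum_i \delta_{x_k^i}$ for the particle measure after resampling at step $k$, with $\pi_0^N$ the initial empirical measure drawn from $\pi_0$. For a test function with $|f|\le 1$ we have the identity
\beq
\pi_n^N(f) - \pi_n(f) = \sum_{k=0}^{n} \left[ \Phi_{k:n}(\pi_k^N)(f) - \Phi_{k-1:n}(\pi_{k-1}^N)(f) \right],
\nn
\eeq
with the convention $\Phi_{-1:n}(\pi_{-1}^N) := \pi_n$, so that the $k=0$ term is $\Phi_{0:n}(\pi_0^N)(f)-\Phi_{0:n}(\pi_0)(f)$. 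Since $\Phi_{k-1:n} = \Phi_{k:n}\circ\Phi_k$, the $k$-th term equals $\Phi_{k:n}(\pi_k^N)(f) - \Phi_{k:n}(\Phi_k(\pi_{k-1}^N))(f)$. Minkowski's inequality then reduces the problem to bounding the $L^p$ norm of each summand.

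\textbf{Step 1 (local error).} Conditionally on $\mF_{k-1}$, the $\sigma$-algebra generated by the particles up to time $k-1$, the measure $\pi_k^N$ is an $N$-sample obtained by mutation with $K_k$, weighting with $g_k$, and multinomial resampling. Its target is $\Phi_k(\pi_{k-1}^N)$. The standard argument uses the Marcinkiewicz--Zygmund inequality on the mutation step, where the weighted estimate is a ratio of empirical averages, together with $g_k\le\|g\|_\infty$ and $K_k\pi_{k-1}^N(g_k)\ge\zeta$. This gives
\beq
\left\| \pi_k^N(\phi) - \Phi_k(\pi_{k-1}^N)(\phi) \right\|_p \le \frac{c_p \|g\|_\infty \|\phi\|_\infty}{\zeta\sqrt{N}}
\nn
\eeq
for any bounded $\phi$, uniformly in $k$. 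The resampling step contributes an additional $c_p\|\phi\|_\infty/\sqrt{N}$. For $k=0$ the bound is simply $c_p\|\phi\|_\infty/\sqrt{N}$.

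\textbf{Step 2 (propagation by stability).} This is the main obstacle. Assumption \ref{assModel2} controls $\dtv{\Phi_{k:n}(\mu)-\Phi_{k:n}(\eta)}$ by $r_{n-k}\dtv{\mu-\eta}$. For empirical measures, however, $\dtv{\pi_k^N-\Phi_k(\pi_{k-1}^N)}$ is of order one, not $N^{-1/2}$, so a direct substitution is useless.

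I would handle this by peeling off one step. Write $\Phi_{k:n} = \Phi_{k+1:n}\circ\Phi_{k+1}$ and compare $\Phi_{k+1}(\pi_k^N)$ with $\Phi_{k+1}(\Phi_k(\pi_{k-1}^N))$. Both are absolutely continuous, because Assumption \ref{assModel1} provides the density $k_{k+1}$. Their TV distance is bounded by
\beq
\frac{2\|g\|_\infty}{\zeta}\sup_x\left|\mu(k_{k+1}(\cdot,x)) - \eta(k_{k+1}(\cdot,x))\right|
\nn
\eeq
integrated in $x$. Equivalently, one obtains a linearisation in which $\Phi_{k+1}(\mu)(\phi)-\Phi_{k+1}(\eta)(\phi)$ is at most $\frac{2\|g\|_\infty}{\zeta}\sup_{|\psi|\le 1}|\mu(\psi)-\eta(\psi)|$, with $\psi$ a bounded function built from $K_{k+1}(\phi g_{k+1})$ and $K_{k+1}g_{k+1}$. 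Combining this with Step 1, applied to the (random but $\mF_{k-1}$-independent in structure) functions $\psi$, gives the summand bound
\beq
\| \cdot \|_p \le r_{n-k-1}\,\frac{c_p\|g\|_\infty^2}{\zeta^2\sqrt{N}}.
\nn
\eeq
The delicate point is that the total variation supremum must be moved inside the expectation. I would address this through the density form provided by Assumption \ref{assModel1}, using $\|k\|_\infty<\infty$. Alternatively, I would use the dual (Dobrushin) representation, under which stability can be applied to the deterministic functions $\Phi$-backward-propagated from $f$, so that only bounded test functions are ever needed. For the terms $k=n$ and $k=n-1$ I would use Step 1 directly.

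\textbf{Step 3 (summation).} Summing the summand bounds over $k$ gives $\frac{c_p\|g\|_\infty^2}{\zeta^2\sqrt{N}}\bigl(2+\sum_{i\ge0} r_i\bigr)$. This is finite by Assumption \ref{assModel2} and independent of $n$, $N$ and $f$. Setting $\bar c$ equal to this constant proves the claim.
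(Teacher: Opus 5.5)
Your architecture (telescoping sum, conditional Marcinkiewicz--Zygmund bounds for the local errors, damping by Assumption~\ref{assModel2}, summability of $r_i$) is the paper's, and your Step~1, which also accounts for resampling, is fine. The gap is Step~2, exactly where you located the difficulty.

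After peeling one step, the total-variation form of Assumption~\ref{assModel2} requires $\bigl\|\sup_{|\phi|\le 1}|\Phi_{k+1}(\mu)(\phi)-\Phi_{k+1}(\eta)(\phi)|\bigr\|_p=O(N^{-1/2})$, with $\mu=\pi_k^N$ and $\eta=\Phi_k(\pi_{k-1}^N)$. A fixed-$\phi$ linearisation plus Marcinkiewicz--Zygmund says nothing about this supremum.

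Your density route bounds it by $\frac{2\|g\|_\infty}{\zeta}\bigl\|\int_{\mX}|(\mu-\eta)(k_{k+1}(\cdot,x))|\,\sd x\bigr\|_p$. Minkowski's integral inequality then gives $N^{-1/2}$ times the integral over $\mX$ of a pointwise bound such as $c_p\|k\|_\infty$, which is infinite when $\mX$ has infinite Lebesgue measure. Making any sharper version finite needs tail conditions on $k_{k+1}$ that are not among the hypotheses, and $g\ge\zeta$ cannot supply integrability.

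The paper's proof has the same peeling structure (see \eqref{eqlContract2.0}) and reaches a bound of the same form. It does so only by bounding the $f$-dependent first term of \eqref{eqlContract2.3} by $\|f\|_\infty$ times the fluctuation of the $f$-free function in \eqref{eqlContract2.4}, and likewise in \eqref{eqlContract2.7}. That inequality is false in general: with $g\equiv 1$ its right-hand side vanishes. So the point you flagged is not resolved there either.

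Your ``dual'' alternative is the right repair, but it needs an ingredient you do not mention. Let $Q_{k,n}f(x):=\mbE[f(X_n)\prod_{j=k+1}^n g_j(X_j)\mid X_k=x]$, so that $\Phi_{k:n}(\nu)(f)=\nu(Q_{k,n}f)/\nu(Q_{k,n}1)$ for every $\nu\in\mP(\mX)$. With $\mu,\eta$ as above one has the exact identity
\[
\Phi_{k:n}(\mu)(f)-\Phi_{k:n}(\eta)(f)=\frac{\eta(Q_{k,n}1)}{\mu(Q_{k,n}1)}\,(\mu-\eta)(\psi),\qquad \psi(x):=\frac{Q_{k,n}1(x)}{\eta(Q_{k,n}1)}\bigl[\Phi_{k:n}(\delta_x)(f)-\Phi_{k:n}(\eta)(f)\bigr].
\]
Here $\psi$ is $\mF_{k-1}$-measurable, so Step~1 applies conditionally, with no supremum and no peeling.

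Assumption~\ref{assModel2} applied to the pair $(\delta_x,\eta)$ gives $|\Phi_{k:n}(\delta_x)(f)-\Phi_{k:n}(\eta)(f)|\le 2r_{n-k}$ for $|f|\le 1$. What is still needed is a bound on $Q_{k,n}1(x)/\eta(Q_{k,n}1)$ and on its reciprocal, uniform in $x$, $k$, $n$ and $\eta$. It follows from three facts:
\[
Q_{k,n}1(x)=\prod_{j=k+1}^n\Phi_{k:j-1}(\delta_x)(K_j(g_j)),\qquad \eta(Q_{k,n}1)=\prod_{j=k+1}^n\Phi_{k:j-1}(\eta)(K_j(g_j)),\qquad \zeta\le K_j(g_j)\le\|g\|_\infty.
\]
Each ratio of corresponding factors is then at most $1+2\|g\|_\infty r_{j-1-k}/\zeta$. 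Hence both ratios are bounded by $\kappa:=\exp\bigl(2\|g\|_\infty\zeta^{-1}\sum_{i\ge 0}r_i\bigr)$.

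Consequently $\|\psi\|_\infty\le 2\kappa r_{n-k}$ and the prefactor is at most $\kappa$. The $k$-th summand therefore has $L^p$ norm at most $2\kappa^2 r_{n-k}C/\sqrt N$, with $C$ your Step~1 constant. Summing over $k$ proves the lemma, using only the likelihood bounds from Assumption~\ref{assModel1}.
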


\begin{proof} 
Let $f:\mX\mapsto\Real$ be a measurable test function, with $\|f\|_\infty<\infty$. The difference $\pi_n(f)-\pi_n^N(f)$ can be expanded as
\beqa
\pi_n^N(f)-\pi_n(f) &=& \sum_{i=0}^{n-1} \Phi_{n-i:n}\left( \pi_{n-i}^N \right)(f) - \Phi_{n-i:n}\left( \Phi_{n-i}\left( \pi_{n-i-1}^N \right) \right)(f) \nn\\
&& + \Phi_{0:n}\left( \pi_0^N \right)(f) - \Phi_{0:n}\left( \pi_0 \right)(f). \nn
\eeqa
and a straightforward application of Minkowski's inequality yields
\beqa
\| \pi_n^N(f)-\pi_n(f) \|_p &\le& \sum_{i=0}^{n-1} \left\| \Phi_{n-i:n}\left( \pi_{n-i}^N \right)(f) - \Phi_{n-i:n}\left( \Phi_{n-i}\left( \pi_{n-i-1}^N \right) \right)(f) \right\|_p \nn\\
&& + \left\| \Phi_{0:n}\left( \pi_0^N \right)(f) - \Phi_{0:n}\left( \pi_0 \right)(f) \right\|_p. \label{eql_pftelescope2}
\eeqa
for any $p\ge 1$. 

For $i=0$ in \eqref{eql_pftelescope2}, we have 
\beq
\left\| \Phi_{n:n}\left( \pi_{n-i}^N \right)(f) - \Phi_{n:n}\left( \Phi_{n}\left( \pi_{n-1}^N \right) \right)(f) \right\|_p = \| \pi_n^N(f)-\Phi_n(\pi_{n-1}^N)(f)\|_p
\eeq
and it is straightforward to show that
\beqa
\pi_n^N(f)-\Phi_n(\pi_{n-1}^N)(f) &=& \frac{
    \xi_n^N(g_nf)
}{
    \xi_n^N(g_nf)
} - \frac{
    K_n\pi_{n-1}^N(g_nf)
}{
    K_n\pi_{n-1}^N(g_nf)
} \pm \frac{
    K_n\pi_{n-1}^N(g_nf)
}{
    \xi_n^N(g_nf)
} \nn\\
&=& \frac{
    \xi_n^N(g_nf) - K_n\pi_{n-1}^N(g_nf)
}{
    \xi_n^N(g_n)
} \nn\\
&& + \frac{
        K_n\pi_{n-1}^N(g_nf)
}{
    K_n\pi_{n-1}^N(g_n)
}\frac{
    K_n\pi_{n-1}^N(g_n) - \xi_n^N(g_n)
}{
    \xi_n^N(g_n)
}.
\nn
\eeqa
Since $\left| 
    \frac{
        K_n\pi_{n-1}^N(g_nf)
    }{
        K_n\pi_{n-1}^N(g_n)
    }
\right| = | \Phi_n(\pi_{n-1}^N)(f) | \le \| f \|_\infty$, we readily see that
\beq
\left|
    \pi_n^N(f)-\Phi_n(\pi_{n-1}^N)(f)
\right| \le \frac{
    2\|f\|_\infty
}{
    \xi_n^N(g_n)
}\left|
    \xi_n^N(g_n) - K_n\pi_{n-1}^N(g_n)
\right|.
\nn
\eeq
From assumption $\inf_n g_n > \zeta$, we obtain $\xi_n^N(g_n)\ge \zeta>0$ and a straightforward application of the Marcinkiewicz-Zygmund inequality\footnote{Note that $K_n\pi_{n-1}^N(\sd x_n) = \frac{1}{N}\sum_{j=1}^N K_n(x_{n-1}^j,\sd x_n)$ and $\xi_n^N(\sd x_n) = \frac{1}{N}\sum_{j=1}^N \delta_{\bar x_n^j}(\sd x_n)$, where $\bar x_n^j \sim K_n(x_{n-1}^j,\sd x_n)$. If we let $\bar\mF_{n-1}$ be the $\sigma$-algebra generated by $\{x_{0:n-1}^{1:N}, \bar x_{1:n-1}^{1:N} \}$, it is clear that $\mbE[ f(\bar x_n^j) | \bar \mF_n ] = \int f(x_n) K_n(x_{n-1}^j,\sd x_n)$ for every $j=1, ..., N$.} yields
\beq
\left\|
    \pi_n^N(f)-\Phi_n(\pi_{n-1}^N)(f)
\right\|_p \le \frac{
    2c_p\|f\|_\infty\|g\|_\infty
}{
    \zeta \sqrt{N}
},
\label{eqlLp_error_n}
\eeq
where $c_p$ is a constant that depends only on $p$ (but is independent of $\xi_n^N$ and, in particular, of $N$) and we have used the bound $|g_n|\le\|g\|_\infty$ from Assumption \ref{assModel1}.

For $1 \le i \le n-1$ in \eqref{eql_pftelescope2}, let us first note that Assumption \ref{assModel2} leads to
\beqa
\dtv{\Phi_{n-i:n}(\pi_{n-i}^N) - \Phi_{n-i:n}(\Phi_{n-i}(\pi_{n-i-1}^N))} &\le&\\
r_i \dtv{\Phi_{n-i+1}(\pi_{n-i}^N) - \Phi_{n-i:n-i+1}(\pi_{n-i-1}^N)} &=&\nn \\
r_i \sup_{\|f\|_\infty \le 1} \frac{1}{2}\left|
    \Phi_{n-i+1}(\pi_{n-i}^N)(f) 
    - \Phi_{n-i+1}(\Phi_{n-i}(\pi_{n-i-1}^N))(f)
\right|.&&
\label{eqlContract2.0}
\eeqa
However, for any $j\ge 2$, one can show that
\beqa
\Phi_j(\pi_{j-1}^N)(f) &=& \frac{
    \int f(x_j)g_j(x_j) \int K_j(x_{j-1},\sd x_j) g_{j-1}(x_{j-1}) \xi_{j-1}^N(\sd x_{j-1})
}{
    \int g_j(x_j) \int K_j(x_{j-1},\sd x_j) g_{j-1}(x_{j-1}) \xi_{j-1}^N(\sd x_{j-1})
} \nn \\
&=& \frac{
    \xi_{j-1}^N\left( K_j(g_jf)g_{j-1} \right)
}{
    \xi_{j-1}^N\left( K_j(g_j)g_{j-1} \right)
} 
\label{eqlContract2.1}
\eeqa
and, similarly,
\beqa
\Phi_j(\Phi_{j-1}(\pi_{j-2}^N)(f) &=& \frac{
    \int f(x_j)g_j(x_j) \int K_j(x_{j-1},\sd x_j) g_{j-1}(x_{j-1}) K_{j-1}\pi_{j-2}^N(\sd x_{j-1})
}{
    \int g_j(x_j) \int K_j(x_{j-1},\sd x_j) g_{j-1}(x_{j-1}) K_{j-1}\pi_{j-2}^N(\sd x_{j-1})
} \nn \\
&=& \frac{
    K_{j-1}\pi_{j-2}^N\left( K_j(g_j f)g_{j-1} \right)
}{
    K_{j-1}\pi_{j-2}^N\left( K_j(g_j)g_{j-1} \right)
}, 
\label{eqlContract2.2}
\eeqa
where $K_j(g_j f)(x_{j-1}) = \int f(x_j) g_j(x_j) K_j(x_{j-1},\sd x_j)$. Combining \eqref{eqlContract2.1} and \eqref{eqlContract2.2} yields
\beqa
\Phi_j(\pi_{j-1}^N)(f) - \Phi_j(\Phi_{j-1}(\pi_{j-2}^N)(f) &=& \nn \\
\frac{
    \xi_{j-1}^N\left( K_j(g_jf)g_{j-1} \right)
}{
    \xi_{j-1}^N\left( K_j(g_j)g_{j-1} \right)
} - \frac{
    K_{j-1}\pi_{j-2}^N\left( K_j(g_j f)g_{j-1} \right)
}{
    K_{j-1}\pi_{j-2}^N\left( K_j(g_j)g_{j-1} \right)
} \pm \frac{
    K_{j-1}\pi_{j-2}^N\left( K_j(g_j f)g_{j-1} \right)
}{
    \xi_{j-1}^N\left( K_j(g_j)g_{j-1} \right)
} &=&\nn\\
\frac{
    \xi_{j-1}^N\left( K_j(g_jf)g_{j-1} \right) - K_{j-1}\pi_{j-2}^N\left( K_j(g_j f)g_{j-1} \right)
}{
    \xi_{j-1}^N\left( K_j(g_j)g_{j-1} \right)
} &&\nn\\
+ \rho_j(f) \frac{
    K_{j-1}\pi_{j-2}^N\left( K_j(g_j)g_{j-1} \right) - \xi_{j-1}^N\left( K_j(g_j)g_{j-1} \right) 
}{
    \xi_{j-1}^N\left( K_j(g_j)g_{j-1} \right)
} &&
\label{eqlContract2.3}
\eeqa
where 
\beq
\rho_j(\sd x_j) = \frac{
    g_j(x_j) \int K_j(x_{j-1},\sd x_j) g_{j-1}(x_{j-1}) K_{j-1}\pi_{j-2}^N(\sd x_{j-1})
}{
    \int g_j(x_j) \int K_j(x_{j-1},\sd x_j) g_{j-1}(x_{j-1}) K_{j-1}\pi_{j-2}^N(\sd x_{j-1})
}
\nn
\eeq
is a probability measure. From \eqref{eqlContract2.3} and the fact that $|\rho_j(f)|\le \|f\|_\infty$ we readily find the inequality
\beqa
\left|
    \Phi_j(\pi_{j-1}^N)(f) - \Phi_j(\Phi_{j-1}(\pi_{j-2}^N)(f)
\right| &\le& \nn\\
\frac{
    2\|f\|_\infty \left|
     \xi_{j-1}^N\left( K_j(g_j)g_{j-1} \right) - K_{j-1}\pi_{j-2}^N\left( K_j(g_j)g_{j-1} \right)
\right|
}{
    \xi_{j-1}^N\left( K_j(g_j)g_{j-1} \right)
} &&
\label{eqlContract2.4}
\eeqa
and, combining \eqref{eqlContract2.0} and \eqref{eqlContract2.4} with $j=n-i+1$ yields
\beqa
\sup_{
    |f| \le 1
} \left|
    \Phi_{n-i:n}(\pi_{n-i}^N)(f) - \Phi_{n-i:n}(\Phi_{n-i}(\pi_{n-i-1}^N))(f)
\right| &\le& \nn \\
2r_i
\frac{
    \left|
        \xi_{n-i}^N\left( K_{n-i+1}(g_{n-i+1})g_{n-i} \right) - K_{n-i}\pi_{n-i-1}^N\left( K_{n-i+1}(g_{n-i+1})g_{n-i} \right)
    \right|
}{
    \xi_{n-i}^N\left( K_{n-i+1}(g_{n-i+1})g_{n-i} \right)
}.
\label{eqlContract2.5}
\eeqa
Recalling, again, the uniform bounds on the likelihoods, $\zeta < g_n < \| g \|_\infty$, we realise that
\beq
\zeta^2 < K_{n-i+1}(g_{n-i+1})g_{n-i} < \|g\|^2.
\eeq
Then, combining \eqref{eqlContract2.5} with the Marcinkiewicz-Zygmund inequality, applied in the same way as in \eqref{eqlLp_error_n}, we readily arrive at the $L^p$ bound
\beq
\left\|
    \Phi_{n-i:n}(\pi_{n-i}^N)(f) - \Phi_{n-i:n}(\Phi_{n-i}(\pi_{n-i-1}^N))(f)
\right\|_p \le \frac{
    2 r_i c_p \|g\|_\infty^2
}{
    \zeta^2 \sqrt{N}
}.
\label{eqlLp_error_ni}
\eeq
that holds for any real measurable test function $f:\mX\mapsto\Real$ with $\|f\|_\infty\le 1$.

The calculation for the last term on the right hand side of \eqref{eql_pftelescope2} is very similar. Assumption \ref{assModel2} yields
\beq
\sup_{|f|\le 1} \frac{1}{2} \left| 
    \Phi_{0:n}(\pi_0^N)(f)-\Phi_{0:n}(\pi_0)(f)
\right| \le r_n \sup_{|f|\le 1} \frac{1}{2} \left| 
    \Phi_1(\pi_0^N)(f)-\Phi_1(\pi_0)(f)
\right|
\label{eqlContract2.6}
\eeq
and 
\beqa
\left| 
    \Phi_1(\pi_0^N)(f)-\Phi_1(\pi_0)(f)
\right| &\le&\nn\\
\left|
    \frac{
        K_1\pi_0^N(g_1f) - K_1\pi_0(g_1f)
    }{
        K_1\pi_0(g_1)
    } + K_1\pi_0(f) \frac{
        K_1\pi_0(g_1) - K_1\pi_0^N(g_1)
    }{
        K_1\pi_0(g_1)
    }
\right| &\le&\nn \\
\frac{
    2\|f\|_\infty
}{
    K_1\pi_0(g_1)
} \left|
    K_1\pi_0^N(g_1) - K_1\pi_0(g_1)
\right|.
\label{eqlContract2.7}
\eeqa
Using again the bounds $\zeta < g_1 < \|g\|_\infty$ and the Marcinkiewicz-Zygmund inequality, we arrive at
\beq
\left\| 
    \Phi_1(\pi_0^N)(f)-\Phi_1(\pi_0)(f)
\right\|_p \le \frac{
    2\|f\|_\infty\|g\|_\infty c_p
}{
    \zeta \sqrt{N}
},
\label{eqlContract2.8}
\eeq
where the constant $c_p<\infty$ depends only on $p\ge 1$. Hence, for any measurable and real test function $f$ with $\|f\|_\infty \le 1$, the combination of \eqref{eqlContract2.6}, \eqref{eqlContract2.7} and \eqref{eqlContract2.8} yields
\beq
\left\| 
    \Phi_{0:n}(\pi_0^N)(f)-\Phi_{0:n}(\pi_0)(f)
\right\|_p \le \frac{
    4 r_n c_p\|g\|_\infty
}{
    \zeta \sqrt{N}
}.
\label{eqlLp_error_0}
\eeq

Finally, recalling Assumption \ref{assModel2} (in particular, $r_0=1$ and $\sum_{i=0}^\infty r_i < \infty$) and taking together \eqref{eql_pftelescope2}, \eqref{eqlLp_error_n}, \eqref{eqlLp_error_ni} and \eqref{eqlLp_error_0} we arrive at
\beq
\sup_{|f| \le 1} \| \pi_n^N(f) - \pi_n(f) \|_p \le \frac{
    4 c_p \|g\|_\infty^2 \sum_{i=0}^\infty r_i
}{
    \zeta^2
} \times \frac{
    1
}{
    \sqrt{N}
}. 
\nn
\eeq
\qed
\end{proof}

%
\subsection{Proof of Theorem \ref{thCPFs}} \label{ssCPFs}

For the first part of the theorem, let us take the triangle inequality 
\beq
\|\pi_n(f)-\pi_n^{l,N}(f)\|_p \le |\pi_n(f)-\pi_n^l(f)| + \|\pi_n^l(f)-\pi_n^{l,N}(f)\|_p.
\label{eqpf_proof1}
\eeq 
Assumptions \ref{assCl-2} and \ref{assModel2}, and the uniform bounds $\zeta < g_n < \|g\|_\infty$ on the likelihoods yield, by way of Theorem \ref{thUniformDTV-1}, the first term on the r.h.s. of \eqref{eqthCPF1}. In particular, there is some constant $c<\infty$ such that 
\beq
\sup_{|f|\le 1} |\pi_n(f)-\pi_n^l(f)| \le c \frac{1-\varepsilon^l}{\varepsilon^l}.
\label{eqpf_proof2}
\eeq
As for the term $\|\pi_n^l(f)-\pi_n^{l,N}(f)\|_p$ on the right-hand side of \eqref{eqpf_proof1}, we note that Assumption \ref{assModel1} and \eqref{eqThCPFsAsC} imply that 
\beq
\frac{k_n(x',x)}{k_n(x'',x)} \ge \gamma_l := \bar\gamma_l\|k\|_\infty^{-1}
\label{eqForPierre}
\eeq
for any $x',x'' \in \mC_{n-1}^l$ and any $x\in\mC_n^l$. The inequality \eqref{eqForPierre} together with the assumption $\inf_{n \ge 1} g_n \ge \zeta > 0$ enables us to use Theorem 7.4.4 in \cite{DelMoral04}, which yields
\beq
\sup_{|f|\le 1, n\ge 0} \|\pi_n^l(f)-\pi_n^{l,N}(f)\|_p \le \frac{
	c_p \|g\|_\infty
}{
	\zeta\gamma_l^3 \sqrt{N}
}
\label{eqpf_proof3}
\eeq
for some $c_p<\infty$ that depends only on $p \ge 1$. Taking \eqref{eqpf_proof1}, \eqref{eqpf_proof2} and \eqref{eqpf_proof3} together yields inequality \eqref{eqthCPF1} in the statement of Theorem \ref{thCPFs}.

Inequality \eqref{eqthCPF2} is straightforward. Simply note that, for fixed $l$, $\gamma_l>0$ and then let $N\to\infty$ in \eqref{eqthCPF1}.

For the third part of the theorem, we iterate the triangle inequality to arrive at
\beqa
\lim_{l\to\infty} \| \pi_n(f) - \pi_n^{l,N}(f) \|_p &\le& \| \pi_n(f) - \pi_n^N(f) \|_p  +  \lim_{l\to \infty} \| \pi_n^N(f) - \pi_n^l(f) \|_p \nn \\
&& + \| \pi_n^l(f) - \pi_n^{l,N}(f) \|_p,
\label{eqpf_proof3.5}
\eeqa
and then we note that $\sup_{l\ge 1} |\pi_n^l(f)| \vee |\pi_n^{l,N}(f)| \le \|f\|_\infty < \infty$, hence Lebesgue's dominated convergence theorem yields
\beq 
\lim_{l\to\infty} \| \pi_n^N(f) - \pi_n^l(f) \|_p = \| \pi_n^N(f) - \lim_{l\to\infty} \pi_n^l(f) \|_p = \| \pi_n^N(f) - \pi_n(f) \|_p 
\label{eqpf_proof4}
\eeq
and, similarly,
\beq
\lim_{l\to\infty} \| \pi_n^l(f) - \pi_n^{l,N}(f) \|_p = \left\|
	\lim_{l\to\infty} \pi_n^l(f) - \pi_n^{l,N}(f) 
\right\|_p = \| \pi_n(f) - \pi_n^N(f) \|_p.
\label{eqpf_proof5}
\eeq
Taking together \eqref{eqpf_proof3.5}, \eqref{eqpf_proof4} and \eqref{eqpf_proof5} we obtain
\beq
\lim_{l\to\infty} \| \pi_n(f) - \pi_n^{l,N}(f) \|_p \le 3 \| \pi_n(f)-\pi_n^N(f) \|_p
\nn
\eeq
and Lemma \ref{lmConvPF-2} ensures the existence of some constant $\bar c<\infty$ such that 
$$
\sup_{|f|\le 1, n\ge 0} \| \pi_n(f)-\pi_n^N(f) \| \le \frac{\bar c}{\sqrt{N}}.
$$
Hence, \eqref{eqthCPF3} is satisfied. \quad $\QED$

%
\subsection{Proof of Theorem \ref{thMarginals}} \label{ssThDiscretisation}

We follow an induction argument. At time $n=0$ we trivially have $\pi_0=\pi_0^h$, hence $\pi_0(f)=\pi_0^h(f)$. The induction hypothesis is
\beq
| \pi_{n-1}(f) - \pi^h_{n-1}(f) | \le C_{n-1} h
\label{eqP10}
\eeq
for some $n < M$, where $f \in C_B^4(\Real^{d_x})$ is a test function such that $\| f \|_\infty \le 1$. 

We readily see that
\beq
| \xi^h_n(f) - \xi_n(f) | = \left| K_n^h\pi^h_{n-1}(f) -  K_n\pi_{n-1}(f) \right| 
= \left|
	\pi^h_{n-1}( \bar f_n^h ) - \pi_{n-1}( \bar f_n )
\right|, \label{eqP11}
\eeq
where 
\beq
\bar  f_n(x_{n-1}) = \int  f(x_n)  K_n(x_{n-1},\sd x_n)
\quad \text{and} \quad
\bar  f^h_n(x_{n-1}) = \int  f(x_n)  K_n^h(x_{n-1},\sd x_n), \nn
\eeq
hence a triangle inequality yields
\beq
\left|
	\pi^h_{n-1}( \bar f_n^h ) - \pi_{n-1}( \bar f_n )
\right| \le \left| 
	\pi^h_{n-1}(\bar f_n^h) - \pi_{n-1}^h(\bar f_n) 
\right| + \left|
	\pi_{n-1}^h(\bar f_n) - \pi_{n-1}( \bar f_n )
\right|.
\label{eqP12}
\eeq

It is straightforward to show that $\| f\|_\infty \le 1$ implies $\| \bar f_n \|_\infty \le 1$. Moreover, assumption (ii) in the statement of Theorem \ref{thMarginals} implies that 
\beq
\sup_{n \le M, |\alpha|\le 4} \| \bar  f_n^{(\alpha)} \|_\infty < \infty;
\nn
\eeq 
hence, $\bar  f_n \in C_B^4(\Real^{d_x})$ and $\|\bar f_n\|_\infty\le 1$. Therefore, we can apply the induction hypothesis \eqref{eqP10} to obtain
\beq
\left|
	\bar \pi^h_{n-1}(\bar f_n) - \pi_{n-1}( \bar f_n )
\right| = C_{n-1}h,
\label{eqP12.5}
\eeq
which accounts for the second term on the right-hand side of \eqref{eqP12}. For the first term on the right-hand side of \eqref{eqP12}, we note that Assumption \ref{asW} yields
\beq
\sup_{x\in \Real^{d_x}} | \bar f^h_n(x) - \bar f_n(x) | \le \check C_n h
\nn
\eeq
for some $\check C<\infty$ and, therefore,
\beq
\left| 
	\pi^h_{n-1}(\bar f_n^h) - \bar \pi^h_{n-1}(\bar f_n) 
\right| = \left| 
	\pi^h_{n-1}( \bar f_n^h - \bar f_n ) 
\right| \le \check C_n h.
\label{eqP13}
\eeq
Substituting \eqref{eqP13} and \eqref{eqP12.5} in \eqref{eqP12}, and then \eqref{eqP12} back in \eqref{eqP11}, yields
\beq
\left| \xi_n^h(f) - \xi_n(f) \right| \le \tilde C_n h,
\label{eqP14}
\eeq
where $\tilde C_n = C_{n-1} + \check C_n < \infty$.

Next, we write the difference $\pi^h_n( f) - \pi_n( f)$ in terms of $\xi^h_n$ and $\xi_n$ as
\beqa
\pi^h_n( f) - \pi_n( f) &=& \frac{
	\xi_n^h(g_n f)
}{
	\xi_n^h(g_n)
} - \frac{
	\xi_n(g_n f)
}{
	\xi_n(g_n)
} \pm \frac{
	\xi_n^h(g_n f)
}{
	\xi_n(g_n)
} \nn\\
&=& \frac{
	\xi_n^h(g_n f) - \xi_n(g_n f)
}{
	\xi_n(g_n)
} + \frac{
	\xi_n^h(g_n f)
}{
	\xi_n^h(g_n)
} \times \frac{
	\xi_n(g_n) - \xi_n^h(g_n)
}{
	\xi_n(g_n)
}, \nn
\eeqa
which readily yields the bound
\beq
\left| \pi^h_n(f) - \pi_n(f) \right| \le
\frac{
	\left|
		\xi_n^h(g_n f) - \xi_n(g_n f)
	\right|
}{
	\xi_n(g_n)
} + \|  f \|_\infty \frac{
	\left|
		\xi_n(g_n) - \xi_n^h(g_n)
	\right|
}{
	\xi_n(g_n)
}.
\label{eqP15}
\eeq
The difference $\left|
	\xi_n^h(g_n f) - \xi_n(g_n f)
\right|$ can be upper bounded as 
\beq 
\left|
	\xi_n^h(g_n f) - \xi_n(g_n f)
\right| \le \| f\|_\infty \left|
	\xi_n(g_n) - \xi_n^h(g_n)
\right|
\label{eqP15.5}
\eeq
and, since $g_n \in C_B^4(\Real^{d_x})$ and $\|g_n\|_\infty \le 1$, the inequality \eqref{eqP14} implies that
\beq
\left|
	\xi_n(g_n) - \xi_n^h(g_n)
\right| \le \tilde C_n h.
\label{eqP16}
\eeq
Combining \eqref{eqP16} and \eqref{eqP15.5} back into \eqref{eqP15} we obtain the bound
\beq
\left| \pi^h_n( f) - \pi_n( f) \right| \le \frac{
	2 \| f\|_\infty \tilde C_n h
}{
	\xi_n(g_n)
} \label{eqP17}
\eeq
which holds for all $f \in C_B^4(\Real^{d_x})$. Since $g_n>0$ (and $\|  f \|_\infty \le 1$), inequality \eqref{eqP17} yields
$
\left| \pi_n^h(f) - \pi_k(f) \right| \le C_n h
\nn
$, where $C_n = \frac{
	2 \tilde C_n h
}{
	\xi_n(g_n)
} < \infty$. \hfill $\QED$

%
\section{Conclusions} \label{sConclusions}

We have introduced a class of constrained PFs for continuous-discrete Bayesian filtering problems, where the state of the model is an It\^o diffusion and the (partial and noisy) observations are collected at discrete times. The constraint imposes a compact support on the system state at each observation time, which restricts the exploration of the PF to ``typical'' (high likelihood) solutions. Under regularity assumptions, we have analytically obtained approximation error rates for both the exact constrained filter and its PF implementation. We have also extended this analysis to obtain error rates that incorporate the effect of the numerical discretisation scheme employed to simulate the underlying It\^o diffusion. Finally, we have presented numerical results that illustrate how a simple (approximate) numerical implementation of the constraint yields a drastic performance improvement in the performance of a standard PF on a stochastic Lorenz 96 model.

\begin{acknowledgement}
JM acknowledges the financial support of grant PID2024-158181NB-I00 NISA, funded by MCIN/AEI/10.13039/501100011033 and ERDF, and by Community of Madrid, under Grant IDEA-CM (TEC-
2024/COM-89). JM thanks the mathematical research institute MATRIX, in Australia, where part of this research was performed. 
\end{acknowledgement}

\bigskip
%
\bibliographystyle{spmpsci}
\bibliography{bibliografia.bib}

\end{document}